\newcommand{\abs}[1]{\left\lvert #1 \right\rvert}
\newcommand{\gen}[1]{\left\langle #1 \right\rangle}
\newcommand{\I}{\ensuremath{{\mathcal{I}}}}
\newcommand{\X}{\ensuremath{{\mathcal{X}}}}
\newcommand{\Y}{\ensuremath{{\mathcal{Y}}}}
\newcommand{\Z}{\ensuremath{{\mathcal{Z}}}}
\newcommand{\sw}[4]{\ensuremath{#1\cdot#2 \leftrightarrow #3\cdot#4}}
\newtheorem{lemma}{Lemma}
\DeclareMathOperator{\tr}{Tr}
\DeclareMathOperator{\diag}{diag}
\begin{document}
\title{Stabilizer notation for Spekkens' toy theory}
\author{Matthew F. Pusey}
\email{m@physics.org}
\affiliation{QOLS, Blackett Laboratory, Imperial College London, Prince Consort Road, London SW7 2BW, United Kingdom}
\date{\today}
\begin{abstract}
  Spekkens has introduced a toy theory [Phys. Rev. A, \textbf{75}, 032110 (2007)] in order to argue for an epistemic view of quantum states. I describe a notation for the theory (excluding certain joint measurements) which makes its similarities and differences with the quantum mechanics of stabilizer states clear. Given an application of the qubit stabilizer formalism, it is often entirely straightforward to construct an analogous application of the notation to the toy theory. This assists calculations within the toy theory, for example of the number of possible states and transformations, and enables superpositions to be defined for composite systems.
\end{abstract}
\pacs{03.65.Ta}
\maketitle

What is the quantum state? The \emph{ontic view} holds that it is a property of the physical system. The \emph{epistemic view} is that it merely represents some agent's knowledge about the system. To support the latter view, Spekkens has constructed a toy theory \cite{spek} where the underlying physical systems are classical yet many quantum features are recovered through an epistemic restriction.

The states, transformations and measurements in the toy theory bear a striking resemblance to those described by the stabilizer formalism for qubits. That formalism describes a non-trivial subset of the quantum mechanical states, transformations and measurements of qubits in a much more compact manner than the normal Hilbert space formalism. A surprising consequence, known as the Gottesman-Knill theorem, is that a non-trivial subset of quantum mechanics can be efficiently simulated on a classical computer. The subset is rich enough to include many quantum phenomena, for instance entanglement, non-locality, quantum teleportation and dense coding.

The formalism was originally developed \cite{firststab} to study quantum error correcting codes, but has seen widespread use since, for example in the study of measurement based quantum computation \cite{mbqc}. I will review the relevant parts of the qubit stabilizer formalism during this paper. For the reader that has never encountered it before, a full introduction to that formalism can be found in Section 10.5 of \cite{nc}, and further useful details in \cite{caves,stabz2}.
\begin{table}[b!]
\begin{tabular}{|c||c|>{\centering}m{2cm}||m{1.3cm}|>{\centering}m{2cm}|}
\hline 
 & \multicolumn{2}{c||}{Quantum mechanics} & \multicolumn{2}{c|}{Toy theory}\tabularnewline
\hline 
 & Hilbert space & Stabilizer formalism & \centering Ontic space & Stabilizer notation\tabularnewline
\hline
\hline 
1 & $\begin{matrix}\Ket{\Phi^{+}}=\\\frac{1}{\sqrt{2}}(\Ket{00}+\Ket{11})\end{matrix}$ & $\begin{matrix}\langle X_{1}X_{2},\\Z_{1}Z_{2}\rangle\end{matrix}$ & \vspace{0.2em}\includegraphics[width=1.3cm]{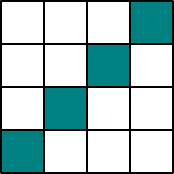} & $\begin{matrix}\langle\X_{1}\X_{2},\\\Z_{1}\Z_{2}\rangle\end{matrix}$\tabularnewline
\hline 
2 & $X = \begin{pmatrix} 0 & 1 \\ 1 & 0 \end{pmatrix}$ & $\begin{matrix}X_1 \to X_1,\\Z_1 \to -Z_1\end{matrix}$  & \vspace{0.2em}\includegraphics[width=1.3cm]{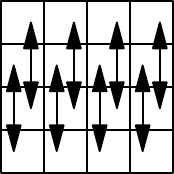} & $\begin{matrix}\X_1 \to \X_1,\\\Z_1 \to -\Z_1\end{matrix}$ \tabularnewline
\hline 
3 & $Z = \begin{pmatrix} 1 & 0 \\ 0 & -1 \end{pmatrix}$ & $\begin{matrix}X_1 \to -X_1,\\Z_1 \to Z_1\end{matrix}$  & \vspace{0.2em}\includegraphics[width=1.3cm]{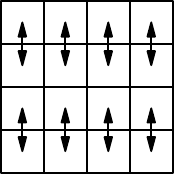} & $\begin{matrix}\X_1 \to -\X_1,\\\Z_1 \to \Z_1\end{matrix}$ \tabularnewline
  \hline
4 & $\begin{matrix}\ket{\Psi^-} =\\ \frac{1}{\sqrt 2}(\ket{01} - \ket{10})\end{matrix}$ & $\begin{matrix}\langle-X_1X_2,\\-Z_1Z_2\rangle\end{matrix}$  & \vspace{0.2em}\includegraphics[width=1.3cm]{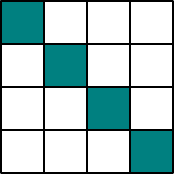} & $\begin{matrix}\langle-\X_1\X_2,\\-\Z_1\Z_2\rangle\end{matrix}$ \tabularnewline
\hline 
5 & $\begin{matrix}\{\ket{\Phi^+}, \ket{\Phi^-},\\ \ket{\Psi^+}, \ket{\Psi^-}\}\end{matrix}$ & $X_1X_2$ and $Z_1Z_2$ & \vspace{0.2em}\includegraphics[width=1.3cm]{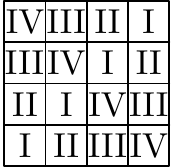} & $\X_1\X_2$ and $\Z_1\Z_2$ \tabularnewline
\hline
\end{tabular}
\caption{Dense coding \protect\cite{dense} in two theories (the quantum version closely follows Section 2.3 of \cite{nc}), each in two notations. For the benefit of those readers that are familiar with \cite{spek}, the column labelled ``ontic space'' reproduces the relevant diagrams from that paper. The remaining readers may ignore that column. 
Initially the joint state of two systems indicated in row 1 is prepared. Alice is given the first system and Bob the second. In isolation, each system can only be used to transmit a single bit of classical information, yet the dense coding protocol allows two bits $(b_1, b_2)$ to be sent from Alice to Bob with the transfer of only one system.  If $b_1 = 1$ then Alice performs the operation indicated in row 2 on her system. Similarly, if $b_2 = 1$ she performs the operation indicated in row 3. For example, if $(b_1, b_2) = (1,1)$ she performs both operations and the joint state of the system is then as shown in row 4. Finally, Alice sends her system to Bob, who performs the joint measurement indicated in row 5 to recover two bits of information.}
\label{densecoding}
\end{table}

The purpose of this paper is to provide a new notation for the toy theory which explains the similarities with the qubit stabilizer formalism, whilst also pinning down exactly how the predictions of the toy theory differ from those of quantum mechanics. A sneak preview of the notation is given by Table~\ref{densecoding}, showing how similar it is to the qubit stabilizer formalism. The key difference is that whilst for qubits $XZ = -iY$, in the toy theory $\X\Z = \Y$.

The fact that the toy theory can be described using a notation so similar to the qubit stabilizer formalism may itself be considered further evidence for the epistemic view of quantum states.

The notation is also useful for carrying out calculations in the theory, as shown by the examples in Section~\ref{examples}. This is primarily for two reasons. Firstly the notation is much more compact: the description of the key objects of the theory (pure epistemic states and reversible transformations) using the stabilizer notation scales polynomially in the number of systems. A direct description, as used in \cite{spek}, scales exponentially. Secondly, there is widespread experience with, and extensive literature on, the qubit stabilizer formalism. Much of this can be applied to the toy theory thanks to the notation provided here.

The use of the notation to describe states, transformations, and measurements is described in Sections~\ref{statessec}, \ref{transsec} and \ref{meassec} respectively. Convex combinations and coherent superpositions are discussed in Section~\ref{mixsec}.

Spekkens has already outlined a new phase-space based formulation of the toy theory \cite{spektalk} which is closely related to this notation as shown in Appendix~\ref{newformulation}. The theory has also been reformulated using category theory \cite{bob}, but since that is very different to the standard qubit stabilizer formalism it does not facilitate calculations and comparisons in the same way as the notation provided here.

\section{States}\label{statessec}
\subsection{Qubits}
Denote the 2 by 2 identity matrix by $I$ and let
\begin{equation}
  X = \begin{pmatrix} 0 & 1 \\ 1 & 0 \end{pmatrix},\ 
  Y = \begin{pmatrix} 0 & -i \\ i & 0 \end{pmatrix},\ 
  Z = \begin{pmatrix} 1 & 0 \\ 0 & -1 \end{pmatrix}.
\end{equation}
Define $P_n$, the \emph{Pauli group on $n$ qubits}, as the $2^n$ by $2^n$ matrices of the form $\alpha p_1 \otimes \dotsm \otimes p_n$ for some $\alpha \in \{1,-1,i,-i\}$ and $p_k \in \{I,X,Y,Z\}$. Call the Hermitian elements (i.e. those with $\alpha \in \{1, -1\}$) \emph{Pauli observables}. Define $X_k$ as $X$ acting on the $k$-th qubit, i.e. $I^{\otimes(k-1)}\otimes X \otimes I^{\otimes(n-k)}$. Similarly for $Y_k$ and $Z_k$. $P_n$ is generated by the $X_k$ and the $Z_k$ along with $iI^{\otimes n}$. 

An element $g$ of the Pauli group, ignoring its phase $\alpha$, can be written as a \emph{check vector} $r(g)$. This is a vector of $2n$ bits, where the first $n$ bits give the locations of $X$s, whilst the second $n$ bits give the locations of $Z$s, a $Y$ being indicates by 1s in both positions. For example, $r(X \otimes Y \otimes Z) = (1,1,0,0,1,1)$, $r(-Z \otimes I) = (0,0,1,0)$. A useful property of check vectors is that $r(gh) = r(g) \oplus r(h)$, where $\oplus$ indicates addition modulo $2$ (making $r$ a group homomorphism from $P_n$ to $\left( \mathbb{Z}_2 \right)^{2n}$).

Let $S$ be a subgroup of $P_n$ that does not contain $-I^{\otimes n}$ (a \emph{qubit stabilizer subgroup}). It is conventional to associate $S$ with the subspace $V_S$ of pure states $\ket{\psi}$ satisfying $g\ket{\psi} = \ket{\psi}$ for all $g \in S$. The projector onto this subspace is \cite{caves}
\begin{equation} P_S = \frac{1}{\abs{S}} \sum_{g \in S} g. \label{projector}\end{equation}

  For comparison with what follows it is useful instead to associate $S$ with the quantum state $\rho_S = \abs{S} 2^{-n} P_S$. This is a pure state if and only if $\abs{S} = 2^n$. Otherwise it is a uniform mixture of $\frac{2^n}{\abs{S}}$ pure states (which form a basis for $V_S$). States of this form are also considered in \cite{ag} and \cite{normalforms}.

  For any Pauli observable $g$
  \begin{equation}
    \tr(g\rho_S) = \begin{cases} 1 & g \in S \\ -1 & -g \in S \\ 0 & \text{otherwise}\end{cases}.
  \end{equation}
  In epistemic language we could say that $\rho_S$ represents complete knowledge about the Pauli observables $g$ with $\pm g \in S$ and zero knowledge about the rest.

\subsection{Toy theory}

In Spekkens toy theory, the simplest systems, called \emph{elementary systems}, are always in one of the four possible \emph{ontic states}. The ontic state is a ``hidden variable'' that completely describes the physical state of affairs of the system.  In the stabilizer notation these ontic states are identified with the vectors $\vec e_1 = (1,0,0,0)$, $\vec e_2 = (0,1,0,0)$, $\vec e_3 = (0,0,1,0)$ and $\vec e_4 = (0,0,0,1)$.

A \emph{composite system} is composed of elementary systems, and its ontic state is specified by specifying the ontic state of each elementary system. For example $2 \cdot 4$ means that the first system is in ontic state $2$ and the second is in state $4$.  In the stabilizer notation this is identified with the tensor product $\vec e_2 \otimes \vec e_4$, whilst $1 \cdot 3 \cdot 2$ is identified with $\vec e_1 \otimes \vec e_3 \otimes \vec e_2$, and so on.  In this way ontic state for $n$ elementary systems will therefore correspond to a $4^n$-dimensional vector $\vec v$ with a single component taking the value 1 and the rest zero.

An \emph{epistemic state} describes an agent's knowledge about a system. An agent that assigns the epistemic state $1 \vee 2$ to an elementary systems knows that its ontic state is either 1 or 2. Crucial to Spekkens' argument is the observation that the epistemic states resemble quantum states, whilst the ontic states do not.

Notice that the knowledge is incomplete: the agent does not know the exact ontic state. This is required by the theory's ``knowledge balance principle'' \cite{spek}. Detailed discussion of this principle is relegated to Appendix~\ref{proofs}, where it is shown that the epistemic states satisfying it are exactly those that can represented as follows:

Denote the 4 by 4 identity matrix by $\I = \diag(1,1,1,1)$ and let
\begin{align}
  \X = \diag(1,-1,1,-1),\\ \Y = \diag(1,-1,-1,1),\\ \Z = \diag(1,1,-1,-1).
\end{align}

These correspond to the possible measurements of an elementary system, where if the ontic state is $k$ then a measurement of $g \in \{\I, \X, \Y, \Z\}$ will return $v \in \{1, -1\}$ according to the eigenvalue equation $g\vec e_k = v\vec e_k$.

Define $G_n$ as the $4^n$ by $4^n$ matrices of the form $\alpha g_1 \otimes \dotsb \otimes g_n$ for some $\alpha \in \{1,-1\}$ and $g_k \in \{ \I, \X, \Y, \Z \}$. This group plays the same role as the Pauli group does in the qubit case. Call the elements \emph{toy observables}. Denote $\X_k = \I^{\otimes (k-1)} \otimes \X \otimes \I^{\otimes (n-k)}$, and similarly for $\Y_k$ and $\Z_k$. $G_n$ is generated by $-\I^{\otimes n}$ together with $\X_k$ and $\Z_k$ for all $k \in \{1, \dotsc, n\}$.

Let $m:G_n \to P_n$ be the mapping suggested by our notation, so that $m(-\X \otimes \Z) = -X \otimes Z$ and so on. Abuse terminology by saying $g, h \in G_n$ \emph{commute (anticommute)} if $m(g)$ and $m(h)$ commute (anticommute). Define the \emph{check vector} of $g \in G_n$ to be $r(m(g))$, the check vector of $m(g)$. As with the qubit case, we have that multiplication in $G_n$ corresponds to addition of check vectors modulo two. (Formally, $r\circ m : G_n \to \left( \mathbb{Z}_2 \right)^{2n}$ is a group homomorphism, even though $m: G_n \to P_n$ is not. Furthermore, appending a ``phase bit'' $\frac12(1+\alpha)$ to the check vector gives a group isomorphism between $G_n$ and $(\mathbb{Z}_2)^{(2n+1)}$.)

Let $S$ be a subgroup of $G_n$ that does not contain $-\I^{\otimes n}$ and for which every $g, h \in S$ commute (a \emph{toy stabilizer subgroup}). $S$ represents an epistemic state for $n$ elementary systems, namely the knowledge that the ontic state $\vec v$ satisfies $g\vec v = \vec v$ for all $g \in S$. This is a subgroup because if $g\vec v = \vec v$ and $h \vec v = \vec v$ then $gh\vec v = \vec v$ also. The projector $P_S$ onto the ontic states compatible with $S$ is of the form \eqref{projector}. 

\begin{table}
  \centering
  \begin{tabular}{| >{\centering}m{2.5cm} | >{\centering}m{2.6cm} | c |}
    \hline
    Picture & Ontic states & Toy stabilizers \\
    \hline
    \vspace{0.5em}
    \includegraphics{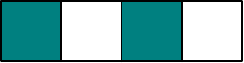}       & $1 \vee 3$               & $\gen{\X}$  \\
    \includegraphics{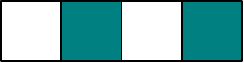}       & $2 \vee 4$               & $\gen{-\X}$ \\
    \includegraphics{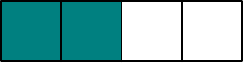}       & $1 \vee 2$               & $\gen{\Z}$  \\
    \includegraphics{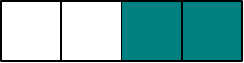}       & $3 \vee 4$               & $\gen{-\Z}$ \\
    \includegraphics{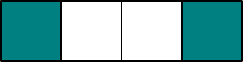}       & $1 \vee 4$               & $\gen{\Y}$  \\
    \includegraphics{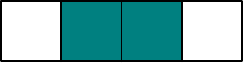}       & $2 \vee 3$               & $\gen{-\Y}$ \\
    \includegraphics{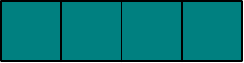} & $1 \vee 2 \vee 3 \vee 4$ & $\gen{}$    \\
    \hline
  \end{tabular}
  \caption{All of the toy stabilizer subgroups for an elementary system. The filled boxes in the first column correspond to the possible ontic states, as in \cite{spek}. The symbol $\vee$ should be read as ``or''.}
  \label{elstates}
\end{table}

\begin{table}
  \centering
  \begin{tabular}{| >{\centering}m{1.95cm} | >{\centering}m{2.5cm} | c |}
    \hline
    Picture & Ontic states & Toy stabilizers \\
    \hline
    \vspace{0.5em}\includegraphics{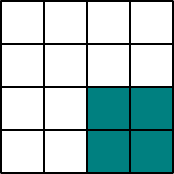}                                       & $(1\cdot 3) \vee (1\cdot 4) \vee (2\cdot 3) \vee (2\cdot 4)$               & $\gen{\Z_1, -\Z_2}$ \\
    \includegraphics{figs/11v22v33v44}                                                 & $(1\cdot 1) \vee (2\cdot 2) \vee (3\cdot 3) \vee (4\cdot 4)$               & $\gen{\Z_1\Z_2, \X_1\X_2}$ \\
    \includegraphics{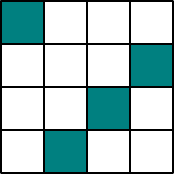}                                                 & $(1\cdot 2) \vee (2\cdot 3) \vee (3\cdot 4) \vee (4\cdot 1)$               & $\gen{-\Z_1\Y_2, -\X_1\X_2}$ \\
    \includegraphics{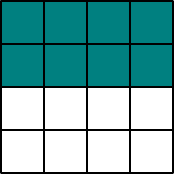}                                                      & $(3\vee4) \cdot (1\vee2\vee3\vee4)$                                        & $\gen{-\Z_1}$ \\
    \includegraphics{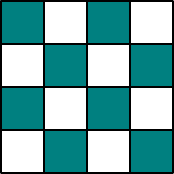}                                                      & $[(1\vee 3)\cdot (2\vee4)]\vee [(2\vee 4)\cdot (1\vee 3)]$                 & $\gen{-\X_1\X_2}$ \\
    \includegraphics{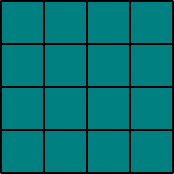}                                             & $(1\vee2\vee3\vee4)\cdot(1\vee2\vee3\vee4)$                                & $\gen{}$ \\
    \includegraphics{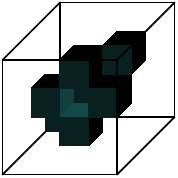}                                                         & $(1\cdot 1\cdot 1)\vee (1\cdot 2\cdot 2)\vee (2\cdot 1\cdot 2)\vee (2\cdot
    2\cdot 1)\vee (3\cdot 3\cdot 3)\vee (3\cdot 4\cdot 4)\vee (4\cdot 3\cdot 4)\vee(4\cdot 4\cdot 3)$ & $\gen{\X_1\X_2\X_3,\Z_1\Z_2,\Z_2\Z_3}$ \\
    \hline
  \end{tabular}
  \caption{Some toy stabilizer subgroups for composite systems. They are each composed of two elementary systems, except for the last which is composed of three. The pictures use the same conventions as in \cite{spek}, briefly: each axis corresponds to an elementary system, and the possible ontic states are filled. In the two-system cases each row is a state of the first system and each column a state of the second, with the ontic state $1\cdot1$ in the bottom-left corner. In the three-system case the ``depth'' gives the state of the third system, with the ontic state $1 \cdot 1 \cdot 1$ is in the bottom-left-back corner.}
  \label{compstates}
\end{table}

Notice that whilst in the qubit case $-I^{\otimes n} \notin S$ automatically ensures the elements of $S$ commute, in the toy theory the commuting requirement is added ``by hand''.

The only element of $S$ with non-zero trace is $\I^{\otimes n}$ and so $\tr{P_S}$, which is the number of ontic states compatible with $S$, is $4^n / \abs{S}$. Since we assume a uniform distribution over the possible ontic states, $\rho_s = \abs{S}4^{-n}P_S$ gives a diagonal matrix of the probabilities for each ontic state. 

Some examples of toy stabilizer subgroups are shown in Tables~\ref{elstates} and \ref{compstates}. They are all reminiscent of qubit stabilizer states, and the following Lemma shows why.
\begin{lemma}[proven in Appendix~\ref{proofs}]
  $g_1, g_2, \dotsc, g_l \in G_n$ are independent generators of a toy stabilizer subgroup
  if and only if
  $m(g_1), m(g_2), \dotsc, m(g_l)$ are independent generators of a qubit stabilizer subgroup.
  \label{samegen}
\end{lemma}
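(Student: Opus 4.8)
The plan is to reduce both sides of the biconditional to a single symmetric condition: the generators pairwise commute and their check vectors $c_i = r(m(g_i))$ are linearly independent over $\mathbb{Z}_2$. Because ``commute'' is \emph{defined} on the toy side to mean that the $m(g_i)$ commute, and because a toy (resp.\ qubit) stabilizer subgroup is by definition generated by commuting elements (it suffices that the generators commute, by bilinearity of the commutation form on check vectors) and omits $-\I^{\otimes n}$ (resp.\ $-I^{\otimes n}$), it suffices to prove two equivalences, with the $m(g_i)$ assumed commuting throughout: that $c_1,\dots,c_l$ are independent if and only if the $g_i$ are independent in $G_n$ and $-\I^{\otimes n}\notin\gen{g_1,\dots,g_l}$; and that $c_1,\dots,c_l$ are independent if and only if the $m(g_i)$ are independent in $P_n$ and $-I^{\otimes n}\notin\gen{m(g_1),\dots,m(g_l)}$.

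First I would record the facts I need. Since the elements of $G_n$ are diagonal, $G_n$ is abelian and every element squares to $\I^{\otimes n}$; hence $\gen{g_1,\dots,g_l} = \{\prod_{i\in A} g_i : A\subseteq\{1,\dots,l\}\}$, and likewise $\gen{m(g_1),\dots,m(g_l)} = \{\prod_{i\in A} m(g_i)\}$ once the $m(g_i)$ commute. I will use that $r\circ m$ is a homomorphism $G_n\to(\mathbb{Z}_2)^{2n}$ with kernel $\{\I^{\otimes n},-\I^{\otimes n}\}$, and that $r$ is a homomorphism on $P_n$ whose kernel among Hermitian elements is $\{I^{\otimes n},-I^{\otimes n}\}$. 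Crucially, these are the only inputs required; I never track the individual phases that $m$ produces.

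The engine of both equivalences is the same. For a subset $A$ the element $\prod_{i\in A}g_i$ has check vector $\sum_{i\in A}c_i$, and two subsets give the same check vector exactly when their symmetric difference indexes a vanishing combination of the $c_i$; so the $c_i$ are independent if and only if the $2^l$ products $\prod_{i\in A}g_i$ have pairwise distinct check vectors. When the $c_i$ are independent this distinctness gives at once that the $g_i$ are independent and --- since $-\I^{\otimes n}$ has check vector $0$, attained only by the empty product $\I^{\otimes n}$ --- that $-\I^{\otimes n}\notin\gen{g_i}$. Conversely, if $-\I^{\otimes n}\notin\gen{g_i}$ then $r\circ m$ is injective on the subgroup, so independence of the $g_i$ transfers to independence of the $c_i$. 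The identical argument with $r$ in place of $r\circ m$ yields the qubit equivalence. The one asymmetry is bookkeeping of the phase bit: independence of the $g_i$ is independence of their images in $(\mathbb{Z}_2)^{2n+1}$, which follows from independence of the $c_i$ in $(\mathbb{Z}_2)^{2n}$ because dropping the phase coordinate cannot raise the rank.

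The step that carries the real content --- and the one a reader steeped in $\X\Z=\Y$ will scrutinize --- is the exclusion of $-I^{\otimes n}$ on the qubit side. Since $m$ is not a homomorphism, a product $\prod_{i\in A}m(g_i)$ can differ in sign from $m(\prod_{i\in A}g_i)$; for example $m(\X_1\X_2)\,m(\Z_1\Z_2) = -Y_1Y_2$ whereas $\X_1\X_2\cdot\Z_1\Z_2 = \Y_1\Y_2$. For \emph{dependent} generators this means the events $-\I^{\otimes n}\in\gen{g_i}$ and $-I^{\otimes n}\in\gen{m(g_i)}$ really can fail to coincide, which is exactly why the lemma is stated only for independent generators. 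The observation that rescues it is that no such sign discrepancy can produce $\pm I^{\otimes n}$ under the independence hypothesis, since that would require a nonempty family of check vectors summing to zero. This lets the proof bypass the phase bookkeeping altogether and assemble the two equivalences, via the common commuting condition, into the stated biconditional.
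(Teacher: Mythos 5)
Your proof is correct and takes essentially the same route as the paper's: both reduce the two statements to the common criterion that the generators commute and have linearly independent check vectors, with the key observation that a vanishing nonempty sum of check vectors forces either a redundant generator or $-\I^{\otimes n}$ (resp.\ $-I^{\otimes n}$) into the subgroup. The only difference is cosmetic: you prove the qubit half of the reduction directly via the subset-product and kernel argument, whereas the paper cites \cite{caves} for that characterization and spells out only the toy-side half.
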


It is crucial to note that although valid lists of independent generators are the same in both theories, the resultant subgroups are in general different. For example $\gen{\X_1\X_2, \Y_1\Y_2} = \{ \I^{\otimes 2}, \X_1\X_2, \Y_1\Y_2, \Z_1\Z_2\}$ whilst $\gen{X_1X_2, Y_1Y_2} = \{ I^{\otimes 2}, X_1X_2, Y_1Y_2, -Z_1Z_2\}$.

It is easy to see that if $g_1, \cdots, g_l$ are independent generators for $S$ then $\abs{S} = 2^l$. The maximum number of independent generators for a qubit stabilizer subgroup is $n$ \cite{nc}, and the above lemma means this is also true for toy stabilizer subgroups. Hence an $S$ with $\abs{S} = 2^n$ represents a state of maximal knowledge, or a \emph{pure state}.

Two epistemic states are called \emph{disjoint} if no ontic state is compatible with both. In this notation there is a useful criterion for disjoint states, which is identical to the criterion for orthogonality in the qubit stabilizer case.
\begin{lemma}
  A pair of toy stabilizer subgroups $S, T \subset G_n$ represent disjoint epistemic states if and only if there exists some $g \in S$ with $-g \in T$.
\end{lemma}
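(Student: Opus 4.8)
The plan is to reduce the disjointness of $S$ and $T$ to a single membership question about the group $\gen{S,T}$ they generate, exploiting the fact that every element of $G_n$ is a diagonal matrix. Since $\I,\X,\Y,\Z$ are all diagonal, so is every $g\in G_n$; hence $G_n$ is abelian, each $g$ satisfies $g^2=\I^{\otimes n}$, and $g\vec e_k=\vec e_k$ precisely when the $k$-th diagonal entry of $g$ is $+1$. An ontic state $\vec e_k$ is therefore compatible with $S$ iff every $g\in S$ carries $+1$ in its $k$-th diagonal slot, and compatible with both $S$ and $T$ iff it is fixed by every element of $S\cup T$, equivalently by every element of $U:=\gen{S,T}$. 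So disjointness of $S,T$ is the same as ``no $\vec e_k$ is fixed by $U$''.

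The easy direction is immediate: if $g\in S$ with $-g\in T$ and $\vec e_k$ were compatible with both, then the $k$-th entry of $g$ is $+1$ while that of $-g$ is $-1$, so $(-g)\vec e_k=-\vec e_k\ne\vec e_k$, contradicting compatibility with $T$. Hence the existence of such a $g$ forces the states to be disjoint.

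For the converse I would first establish the counting claim that $U$ fixes some $\vec e_k$ if and only if $-\I^{\otimes n}\notin U$. Writing $P_U=\frac{1}{\abs{U}}\sum_{g\in U}g$, a group rearrangement gives $P_U^2=P_U$, and being diagonal its $k$-th entry is $\frac{1}{\abs{U}}\sum_{g\in U}(g)_{kk}$; since $g\mapsto(g)_{kk}$ is a homomorphism $U\to\{1,-1\}$, this sum is $1$ when $\vec e_k$ is compatible with $U$ and $0$ otherwise (a nontrivial $\pm1$ character sums to zero over a group). Thus the number of fixed ontic states is $\tr P_U=\frac{1}{\abs{U}}\sum_{g\in U}\tr g$. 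Because among all of $G_n$ only $\pm\I^{\otimes n}$ have nonzero trace ($\tr\X=\tr\Y=\tr\Z=0$), this equals $4^n/\abs{U}>0$ exactly when $-\I^{\otimes n}\notin U$, whereas $-\I^{\otimes n}\in U$ visibly forbids any fixed $\vec e_k$. I expect this counting step to be the main obstacle, though it is short and parallels the trace calculation already used for $P_S$ in the text.

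Finally I would convert the membership statement into the desired generator. Because $G_n$ is abelian, $U=\gen{S,T}=ST=\{st:s\in S,\ t\in T\}$, so $-\I^{\otimes n}\in U$ means $-\I^{\otimes n}=st$ for some $s\in S$, $t\in T$; using $s^{-1}=s$ this yields $t=-s$, i.e. $g:=s\in S$ has $-g\in T$. Chaining the equivalences (disjoint $\Leftrightarrow$ no $\vec e_k$ fixed by $U$ $\Leftrightarrow$ $-\I^{\otimes n}\in U$ $\Leftrightarrow$ some $g\in S$ has $-g\in T$) then completes the argument.
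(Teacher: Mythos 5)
Your proof is correct and takes essentially the same route as the paper's: both reduce disjointness to the vanishing of $P_{\gen{S \cup T}}$, show this is equivalent to $-\I^{\otimes n} \in \gen{S \cup T}$, and extract $g \in S$ with $-g \in T$ using commutativity and $g^{-1} = g$. The only difference is that you fill in details the paper leaves implicit --- your trace/character computation establishing $P_U = 0$ if and only if $-\I^{\otimes n} \in U$ substitutes for the paper's asserted identity $P_S P_T = P_{\gen{S \cup T}}$ --- which is an expansion of the same argument rather than a different method.
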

\begin{proof}
  $S$ and $T$ represent disjoint states if and only if $0 = P_SP_T = P_{\gen{S \cup T}}$. This holds if and only if $-\I^{\otimes n} \in \gen{S \cup T}$. But since $-\I^\otimes{n} \notin S, T$, this holds if and only if there exists $g \in S$ with $-g \in T$.
\end{proof}

Finally, note that the theory of qubits restricted to the computational basis is just the classical probability theory of bits. It is occasionally helpful to view the above notation as being based on the encoding of the ontic level of the toy theory (two classical bits per elementary system) in the computational basis states of qubits. Since $\X = I \otimes Z$, $\Y = Z \otimes Z$ and $\Z = Z \otimes I$, the toy observables are then Hermitian observables on those qubits. This device should not be taken too seriously, for example (as discussed in Section~\ref{meassec}) a measurement of $\X$ can disturb the value of $\Z$ in the toy theory, whereas the observables $I \otimes Z$ and $Z \otimes I$ are compatible in quantum mechanics.

\section{Transformations}\label{transsec}
\subsection{Qubits}
  Define $C_n$, the \emph{Clifford group on n qubits} as the $2^n$ by $2^n$ unitaries $U$ satisfying $UgU^\dagger \in P_n$ for all $g \in P_n$. These take stabilizer states to stabilizer states, since $U \rho_S U^\dagger = \rho_{T}$ where $T = \{ UgU^\dagger | g \in S \}$ is another stabilizer subgroup.
  
  Since $UghU^\dagger = UgU^\dagger UhU^\dagger$ we can specify the action of $U$ by its action on the generators of $P_n$. Since $UiI^{\otimes n}U^\dagger = iI^{\otimes n}$ it suffices to specify the action on the $X_k$ and $Z_k$. Note that $[UgU^\dagger, UhU^\dagger] = U[g,h]U^\dagger$. Hence if $UX_kU^\dagger = g_k$ and $UZ_kU^\dagger = h_k$ we will have that the $g_k$ commute, the $h_k$ commute, and $g_k$ commutes with $h_l$ if and only if $k \neq l$. Since $U$ is reversible we must have that the $h_k$ and $g_k$, together with $iI^{\otimes n}$, generate $P_n$. 
  
  Such $h_k, g_k$ are described in \cite{caves} as \emph{canonical generator sets}. It is then shown that, conversely, for any given canonical generator set there is a Clifford unitary that maps the $X_k$ and $Z_k$ to it. That unitary can be generated from the single qubit Clifford unitaries plus the controlled-NOT gate, which sends $X_1 \to X_1X_2, X_2 \to X_2, Z_1 \to Z_1, Z_2 \to Z_1Z_2$. Furthermore the single qubit Clifford unitaries can themselves be generated from the Hadamard gate, which sends $X \to Z, Z \to X$, and the phase gate, which sends $X \to Y, Z \to Z$.

\subsection{Toy theory}
In the toy theory reversible transformations are permutations of the ontic states that take any valid epistemic state to another valid epistemic state. In this section I show that the group of reversible transformations in the toy theory has a very similar structure to the Clifford group. The transformations act on the $\X_k$ and $\Z_k$ in the same way as the qubit case. Furthermore the transformations can be generated in a similar fashion to the Clifford group.

We can write a reversible transformation on $n$ elementary systems as a $4^n$ by $4^n$ permutation matrix $U$.  For example, the permutation matrix for the permutation $1 \to 2 \to 3 \to 1$ on an elementary system is
\begin{equation}
  U = \begin{pmatrix}
    0 & 0 & 1 & 0 \\
    1 & 0 & 0 & 0 \\
    0 & 1 & 0 & 0 \\
    0 & 0 & 0 & 1
  \end{pmatrix}
\end{equation}
If $\vec v$ is the ontic state before the transformation then $U\vec v$ is the state afterwards. Hence if we know that the ontic state is in the support of a projector $P_S$ before then our new knowledge is exactly that the ontic state is in the support of the projector $U P_S U^T$.

Before the transformation the epistemic state may have been $\gen{g}$ for any $g \in G_n$. Since $P_{\gen{g}} = \frac12 (\I^{\otimes n} + g)$, for the new epistemic state to be valid (and hence represented by a toy stabilizer subgroup) we must have $UgU^T \in G_n$. Compare this to the definition of the Clifford group. To ensure the elements of the new subgroup commute we must also require that $UgU^T$ and $UhU^T$ commute if and only if $g$ and $h$ commute. 

The theory now develops much like the qubit case. Since $UghU^T = UgU^TUhU^T$ and $U(-\I^{\otimes n})U^T = -\I^{\otimes n}$ we can specify the action of $U$ by its action on the $\X_k$ and $\Z_k$. The resulting elements must have the same commutation structure and, along with $-\I^{\otimes n}$, generate $G_n$. Call such elements canonical generating sets, and note that applying $m$ gives a qubit canonical generator set.

For an elementary system, it is shown in Table~\ref{eltrans} that all of the $4! = 24$ permutations of the ontic states are valid, and that there is a permutation that sends the $\X$ and $\Z$ to any canonical generator set.

\begin{table}
  \centering
  \begin{tabular}{|c|c|c|}
    \hline
    Permutation & Effect on toy stabilizers \\
    \hline
(1)(2)(3)(4) & $\X \to \X$, $\Z \to \Z$   \\
(1)(2)(43)   & $\X \to \Y$, $\Z \to \Z$   \\
(1)(32)(4)   & $\X \to \Z$, $\Z \to \X$   \\
(1)(342)     & $\X \to \Y$, $\Z \to \X$   \\
(1)(432)     & $\X \to \Z$, $\Z \to \Y$   \\
(1)(42)(3)   & $\X \to \X$, $\Z \to \Y$   \\
(21)(3)(4)   & $\X \to -\Y$, $\Z \to \Z$  \\
(21)(43)     & $\X \to -\X$, $\Z \to \Z$  \\
(231)(4)     & $\X \to \Z$, $\Z \to -\Y$  \\
(2341)       & $\X \to -\X$, $\Z \to -\Y$ \\
(2431)       & $\X \to \Z$, $\Z \to -\X$  \\
(241)(3)     & $\X \to -\Y$, $\Z \to -\X$ \\
(321)(4)     & $\X \to -\Y$, $\Z \to \X$  \\
(3421)       & $\X \to -\Z$, $\Z \to \X$  \\
(31)(2)(4)   & $\X \to \X$, $\Z \to -\Y$  \\
(341)(2)     & $\X \to -\Z$, $\Z \to -\Y$ \\
(31)(42)     & $\X \to \X$, $\Z \to -\Z$  \\
(3241)       & $\X \to -\Y$, $\Z \to -\Z$ \\
(4321)       & $\X \to -\X$, $\Z \to \Y$  \\
(421)(3)     & $\X \to -\Z$, $\Z \to \Y$  \\
(431)(2)     & $\X \to \Y$, $\Z \to -\X$  \\
(41)(2)(3)   & $\X \to -\Z$, $\Z \to -\X$ \\
(4231)       & $\X \to \Y$, $\Z \to -\Z$  \\
(41)(32)     & $\X \to -\X$, $\Z \to -\Z$ \\
\hline
  \end{tabular}
  \caption{The valid reversible transformations for an elementary system. The first column shows the permutations to the ontic states in cycle notation, for example (342) indicates that $3 \to 4 \to 2 \to 3$. The second shows the result of $UgU^T$ on two non-trivial generators of $G_1$.}
  \label{eltrans}
\end{table}

It is well known that the group of permutations on $n$ elements can be generated by a transposition and an $n$-cycle. Hence any elementary transformation can be written as a sequence of, for example $3 \leftrightarrow 2$ and $1 \to 4 \to 2 \to 3 \to 1$ transformations. The first is reminiscent of a Hadamard gate in that it maps $\X \to \Z$ and $\Z \to \X$, although it maps $\Y \to \Y$ whereas a Hadamard maps $Y \to -Y$. The second is reminiscent of a phase gate in that it maps $\X \to \Y$ and $\Y \to -\X$, although it maps $\Z \to -\Z$ whereas a phase gate maps $Z \to Z$.

\begin{table}
  \centering
  \begin{tabular}{| >{\centering}m{2.5cm} | >{\centering}m{3.8cm} |}
    \hline
    Permutation & Effect on toy stabilizers \tabularnewline
    \hline
    \vspace{0.2em}\includegraphics[width=2.5cm]{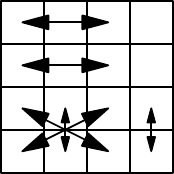} &
      $\X_1 \to -\Y_1$, $\X_2 \to \X_2$, $\Z_1 \to \Z_1$, $\Z_2 \to -\Y_2$\tabularnewline
    \includegraphics[width=2.5cm]{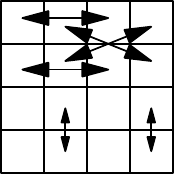} &
    $\X_1 \to \X_1 \X_2$, $\X_2 \to \X_2$, $\Z_1 \to \Z_1$, $\Z_2 \to \Z_1\Z_2$\tabularnewline
\hline
  \end{tabular}
  \caption{Two reversible transformations for pairs of elementary systems. The first column shows the permutations to the ontic states (laid out in a grid using the same conventions as \cite{spek} and Table~\ref{compstates}). The second shows the action on the non-trivial generators of $G_2$. The first transformation acts on each system separately. The second transformation is analogous to a controlled-NOT.}
  \label{comptrans}
\end{table}

The argument in \cite{caves} now gives that there is a permutation on a composite system that sends the $\X_k, \Z_k$ to any given canonical generator set, and that it can be generated using the elementary system transformations and the controlled-NOT gate shown in Table~\ref{comptrans}.

One difference between the transformations in the two theories has already been noted: transformations that act on $\X_k$ and $\Z_k$ in the same way in both theories do not necessarily act in the same way on other group elements, for example the $\Y_k$. Another difference is that whilst commutation structure is automatically preserved by any unitary transformation on qubits, this requirement must be added ``by hand'' in the toy theory.  For example, the matrix $U$ for the two-system permutation
\begin{equation}
  \sw1231, \sw1433, \sw2241, \sw2443
\end{equation}
satisfies the requirement that $UgU^\dagger \in G_2$ for all $g \in G_2$, but has $U\X_1U^T = \X_1$ and $U\X_2U^T = \Z_1$ and so is not a valid transformation.

\section{Measurements}\label{meassec}
\subsection{Qubits}
  Suppose a Pauli observable $g$ is measured on a state described by $S$. If $\pm g \in S$ then $\pm 1$ is returned and the state is unchanged.
  
If $\pm g \notin S$, then the result $v \in \{1, -1\}$ will be completely random. To find the state after the measurement, first write down a set of independent generators for $S$ with at most one element $h$ that anticommutes with $g$. (Any other elements that anticommute can be multiplied by $h$.) Add $vg$ to the list and remove $h$ (if present) to obtain a list of independent generators for the new state.

  \subsection{Toy theory}
A measurement in the toy theory is specified by partitioning the ontic state into valid epistemic states. The result of the measurement is simply whichever member of the partition the ontic state lies in. In order to ensure that the new epistemic state is valid, the measurement must disturb the ontic state. 

In the toy stabilizer notation we can describe measurements of toy observables $g$, i.e. the partitioning of the ontic states into $\gen{g}$ and $\gen{-g}$. In this section I will show that that measurements of toy observables behave in exactly the same way as Pauli observables in the qubit case, but that not every measurement in the toy theory can be described using toy observables.

Let the epistemic state before the measurement be described by a toy stabilizer subgroup $S$. If $g \in S$ then all the possible ontic states lie in the $+1$ eigenspace of $g$, so we are certain to get the $+1$ outcome. Similarly if $-g \in S$ we are certain to get the $-1$ outcome. Notice that the expected value of $g$ is $\tr(g\rho_S)$. If $g, -g \notin S$ then this gives $0$. Hence the measurement of such a toy observable must give either outcome with equal probability.

In order to ensure that the epistemic state after the measurement is valid, there must be a random disturbance to the value of any anticommuting toy observables. Let us assume this is the only disturbance - all commuting observables are unaffected.

If $\pm g$ was already in the subgroup then there is no change to the epistemic state. Otherwise, if the outcome of a measurement is $\pm 1$ the new toy stabilizer subgroup will be generated by $\pm g$ and the old stabilizer subgroup without any elements that anticommute with $g$. A systematic way to update the toy stabilizer subgroup is to follow the qubit stabilizer procedure: write a list of independent generators for the old state with at most one element that anticommutes with $g$, delete that element if present, and add $\pm g$.

An example of a valid measurement with four outcomes is the partitioning of the ontic states into those associated with $\gen{\X_1\X_2, \Z_1\Z_2}$, $\gen{-\X_1\X_2, \Z_1\Z_2}$, $\gen{\Z_1, -\Z_2}$ and $\gen{-\Z_1, \Z_2}$. This measurement can be handled in the stabilizer notation by first considering a measurement of $\Z_1\Z_2$. The $+1$ outcome means that the ontic state must lie in one of the first two sets and we can measure $\X_1\X_2$ to determine which. Similarly the $-1$ outcome narrows it down to the last two sets and we then measure $\Z_1$. 

Can every valid measurement be described using sequences of toy observables in this way? Somewhat surprisingly, the answer is no.

\begin{lemma}[proven in Appendix~\ref{proofs}]
    Any valid measurement on one or two elementary systems is equivalent to measuring a sequence of toy observables.
    \label{onetwomeas}
  \end{lemma}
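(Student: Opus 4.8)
The plan is to realise any valid measurement as an adaptive decision tree whose internal nodes are measurements of single toy observables and whose leaves are the parts of the partition. Suppose the pre-measurement state is a toy stabilizer subgroup $R$ (initially $\gen{}$) and that the measurement partitions the support of $R$ into valid epistemic states $S_1,\dots,S_k$, each containing $R$. Write $\hat{S_i}\subseteq(\mathbb{Z}_2)^{2n}$ for the check space of $S_i$, i.e.\ its image under $r\circ m$, and $\hat g:=r(m(g))$. Since $-\I^{\otimes n}\notin S_i$, the map $r\circ m$ is injective on each $S_i$, so a toy observable $g$ has a definite value on the part $S_i$ (that is $g\in S_i$ or $-g\in S_i$) exactly when $\hat g\in\hat{S_i}$. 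Hence measuring $g$ is a legitimate \emph{first} step precisely when $\hat g\in\bigcap_i\hat{S_i}$; after outcome $\pm1$ the parts on which $g$ took value $\pm1$ partition the support of $\gen{R\cup\{\pm g\}}$, and one recurses. Compatibility of the whole sequence is automatic: every observable measured along a branch has its check vector in $\hat{S_i}$ for each surviving part, and those spaces are isotropic, so all observables measured on a branch pairwise commute and no earlier result is disturbed.

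The statement therefore reduces to an existence claim: whenever $k\ge2$ there is a toy observable $g$ with $\hat g\in\bigcap_i\hat{S_i}$ but $\pm g\notin R$, equivalently $\bigcap_i\hat{S_i}\supsetneq\hat R$. Such a $g$ automatically splits the partition, because $\pm g\notin R$ gives $\tr(g\rho_R)=0$, so half the ontic states in the support of $R$ return each value while each (value-determined) part lies wholly in one outcome. Granting the existence claim, induction on the size of the support of $R$ finishes the argument.

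To prove the existence claim I would work in the check space. Any two parts $S_i,S_j$ are disjoint, so by the disjointness criterion there is $g\in S_i$ with $-g\in S_j$; a short check (using that $g$ and $-g$ cannot both lie in any valid subgroup) shows $\hat g\in(\hat{S_i}\cap\hat{S_j})\setminus\hat R$, so the check spaces pairwise meet strictly above $\hat R$. Since $\hat R$ is isotropic, I pass to the symplectic reduction $\hat R^{\perp}/\hat R$; every $\hat{S_i}$ lies in $\hat R^{\perp}$ (an isotropic space containing $\hat R$ is $\omega$-orthogonal to $\hat R$), so the images $\bar{S_i}$ are isotropic subspaces that pairwise intersect nontrivially, and it suffices to show they share a common nonzero vector, which then lifts to the desired $\hat g$.

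This last point is the crux, and it is exactly where $n\le2$ is used. The reduced space is symplectic of rank $n'\le n\le2$, so its isotropic subspaces have dimension at most $2$. If some $\bar{S_i}$ is one-dimensional we are done, as every other (nontrivially intersecting) isotropic must then contain it; this already disposes of every partition type for two systems except the all-pure $\{4,4,4,4\}$ case, since any $8$-element part contributes a one-dimensional $\bar{S_i}$. Otherwise all $\bar{S_i}$ are Lagrangian and any two meet in a line. For any three, if these lines were distinct the three Lagrangians would be spanned by three pairwise symplectically orthogonal vectors, giving a three-dimensional totally isotropic subspace, which is impossible at rank at most $2$; hence any three share a line, and since $\bar{S_1}\cap\bar{S_2}$ is then contained in every $\bar{S_j}$, the whole family shares it. The isotropic-geometry step is the main obstacle, and it also explains the hypothesis: for $n\ge3$ three-dimensional totally isotropic subspaces exist, so the argument — and the lemma — break down. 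A more pedestrian alternative would be to enumerate the partition types by part size and verify each directly, but the symplectic argument handles the remaining $\{4,4,4,4\}$ case cleanly and transparently.
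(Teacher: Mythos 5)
Your decision-tree strategy is correct and genuinely different from the paper's proof. The paper treats one system by inspection and, for two systems, fixes a four-outcome measurement, writes $S_1=\{\I^{\otimes2},a,b,ab\}$, and uses the resolution of the identity $P_{S_1}+P_{S_2}+P_{S_3}+P_{S_4}=\I^{\otimes2}$ to chase cases until a first observable $g$ emerges, after which $h^\pm$ are extracted from the disjointness criterion; two- and three-outcome measurements are waved off as ``simpler''. Your reduction to a single existence claim ($\bigcap_i\hat{S_i}\supsetneq\hat R$ whenever $k\ge2$), proved in the symplectic reduction $\hat R^\perp/\hat R$, handles all outcome counts, all part sizes, and $n=1$ and $n=2$ in one uniform induction; it also isolates exactly where the hypothesis enters (no totally isotropic $3$-space at rank $\le 2$), which dovetails neatly with the counterexample of Lemma~\ref{threemeas} at $n=3$. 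The bookkeeping you do up front is right and worth having explicitly: a first measurement must satisfy $\hat g\in\bigcap_i\hat{S_i}$ (if the support of $S_i$ lies in one eigenspace of $g$ then $\tr(g\rho_{S_i})=\pm1$, forcing $\pm g\in S_i$), the lift from $\hat R^\perp/\hat R$ works because $\hat R\subseteq\hat{S_i}$ for every $i$, and commutation along a branch follows since each $\hat{S_i}$ is isotropic, so the toy theory's measurement disturbance never corrupts earlier results.

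One step is misstated, though it is patchable within your framework. In the all-Lagrangian case you infer that three \emph{distinct} pairwise intersection lines $\ell_{12},\ell_{13},\ell_{23}$ span a three-dimensional totally isotropic subspace. Over $\mathbb{Z}_2$ distinctness does not give dimension three: three distinct pairwise-orthogonal lines can be coplanar, spanned by $v$, $w$ and $v+w$. You must dispose of the coplanar case separately. First discard the trivial situation where all the planes $\bar S_i$ coincide (then any nonzero vector is common); otherwise pick $\bar S_1\neq\bar S_2$ with $\ell_{12}=\bar S_1\cap\bar S_2$ a line, and suppose some $\bar S_3$ (distinct from both) fails to contain $\ell_{12}$, so that $\ell_{12},\ell_{13},\ell_{23}$ are three distinct lines. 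If their span were two-dimensional then $v_{23}=v_{12}+v_{13}$, and since $v_{12},v_{13}\in\bar S_1$ this puts $v_{23}\in\bar S_1$; but $v_{23}\in\bar S_2$ as well, so $v_{23}\in\bar S_1\cap\bar S_2=\ell_{12}$, contradicting distinctness. Hence the span is three-dimensional and your isotropy contradiction applies. With that repair the argument goes through and establishes the lemma.
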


  \begin{lemma}
    Not every valid measurement on three or more elementary systems is equivalent to measuring a sequence of toy observables.
    \label{threemeas}
  \end{lemma}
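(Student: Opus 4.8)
The plan is to isolate a simple structural invariant shared by every measurement that can be built from a sequence of toy observables, and then exhibit a valid three-system measurement that lacks it.

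First I would establish the necessary condition. Suppose a measurement, i.e.\ a partition of the $4^n$ ontic states into cells $S_1,\dots,S_N$ (each a toy stabilizer subgroup), is realised by measuring some sequence of toy observables, and consider the observable $g$ measured first. Because every $g\in G_n$ is diagonal, its outcome on an ontic state $\vec v$ is the eigenvalue in $g\vec v = \pm\vec v$, a deterministic function of $\vec v$; since distinct cells are distinguished by the outcome record and two ontic states in the same cell must yield the same record, this first outcome is constant on each cell. Hence $\pm g$ lies in every $S_i$, so the stabilizers of the cells share a common nonzero check vector $r(m(g))$. (The point that makes the whole sequence behave this way is that only observables commuting with everything already measured give outcomes that remain functions of the initial ontic state, the disturbance having randomised all anticommuting observables; so along each branch the measured observables pairwise commute and the first is common to all cells.) Thus it suffices to produce a valid measurement whose cell stabilizers have trivial common check vector, i.e.\ $\bigcap_i r(m(S_i)) = \{0\}$.

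Next I would construct such a measurement for $n=3$. Starting from the complete measurement of $\Z_1,\Z_2,\Z_3$, I would regroup its eight cells into four pairs and re-split each pair along a different maximal commuting set (a different pure state), chosen to break the commonality of each generator in turn. A concrete choice yields eight pure cells whose stabilizers are $\gen{\X_1,\Z_2,\Z_3}$, $\gen{\Z_1,\Z_2,\X_3}$, $\gen{\Z_1,\X_2,\Z_3}$ and the GHZ group $\gen{\X_1\X_2\X_3,\Z_1\Z_2,\Z_2\Z_3}$ (two cells from each). I would then verify that the eight cells partition all $64$ ontic states, that each stabilizer is a valid (commuting) toy stabilizer subgroup --- immediate from Lemma~\ref{samegen} once the corresponding qubit generators are checked --- and that the common check vector of all eight stabilizers is $\{0\}$; indeed already the first three stabilizers above intersect only in the identity, a short linear-algebra check. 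By the first paragraph this measurement is equivalent to no sequence of toy observables, and for $n>3$ one simply applies it to three systems and measures nothing on the rest.

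The main obstacle is the construction rather than the obstruction, which is routine once the condition is in hand. One must produce a partition that is simultaneously an exact tiling of the ontic states by valid epistemic states and ``rootless'', sharing no common observable, and these two demands pull against each other: cosets of distinct maximal commuting sets tend to overlap rather than tile, so the four directions and their translates must be made to interlock exactly. I expect this same tension is what forces the statement to fail for one and two systems (Lemma~\ref{onetwomeas}): there the ontic space is too small for the cell directions to span, so a common check vector must survive --- which also explains why the re-splitting trick genuinely requires the third system.
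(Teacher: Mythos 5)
Your overall strategy is exactly the paper's: the paper also observes that the first toy observable $g$ in any sequence must have $\pm g$ in every outcome subgroup, and then exhibits an eight-outcome partition, equation~\eqref{nonstabmeas}, whose cells (indeed, whose first three cells alone) admit no such non-trivial $g$. Your explicit derivation of that necessary condition is a sound elaboration of what the paper compresses into one sentence, and your obstruction test (trivial intersection of the check-vector spans, which is sign-independent) is the right one.

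The genuine gap is in your ``concrete choice'': as written, with all generators carrying $+$ signs, the eight cells do \emph{not} partition the ontic space. For example $\gen{\X_1,\Z_2,\Z_3}$ and $\gen{\Z_1,\Z_2,\X_3}$ both contain the ontic state $1\cdot1\cdot1$ --- by the disjointness criterion of Section~\ref{statessec} they cannot be disjoint, since no element of one appears negated in the other --- and the GHZ pair also overlaps the octant $\Z_1=\Z_2=\Z_3=+1$ claimed by the other three. The signs are not bookkeeping garnish but the entire content of the tiling condition: viewing the eight sign-octants $(\Z_1,\Z_2,\Z_3)=(\pm,\pm,\pm)$ as vertices of a cube, each of your re-split pairs occupies an edge in one coordinate direction (a main diagonal for the GHZ pair), and these must be chosen to tile the eight vertices. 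A corrected instance of your scheme is
\begin{equation}
  \gen{\pm\X_1,\Z_2,-\Z_3},\ \gen{-\Z_1,\pm\X_2,\Z_3},\ \gen{\Z_1,-\Z_2,\pm\X_3},\ \gen{\pm\X_1\X_2\X_3,\Z_1\Z_2,\Z_2\Z_3},
\end{equation}
whose four base regions cover the octants $(\pm,+,-)$, $(-,\pm,+)$, $(+,-,\pm)$ and $(+,+,+),(-,-,-)$ respectively, exactly once each; the obstruction argument then goes through as you describe. Note that the paper sidesteps the diagonal entirely: it re-splits only three of the four octant-pairs and keeps two of the original $\gen{\pm\Z_1,\mp\Z_2,\pm\Z_3}$ cells, so that all eight outcomes are \emph{product} states --- a choice that costs nothing for the lemma but is what licenses the subsequent ``non-locality without entanglement'' discussion, which your GHZ variant would forfeit. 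Your closing heuristic about why one and two systems are different is not load-bearing and is not how Lemma~\ref{onetwomeas} is actually proved, so no objection there, but it should not be mistaken for an argument.
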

  \begin{proof}
    Consider the measurement on three or more elementary systems with eight outcomes
    \begin{equation}
      \begin{matrix}
      \gen{\Z_1, \Z_2, \X_3},\  \gen{-\Z_1, \X_2, \Z_3},\  \gen{\X_1, -\Z_2, -\Z_3},\\
      \gen{\Z_1, \Z_2, -\X_3},\  \gen{-\Z_1, -\X_2, \Z_3},\  \gen{-\X_1, -\Z_2, -\Z_3},\\
      \gen{\Z_1, -\Z_2, \Z_3},\  \gen{-\Z_1, \Z_2, -\Z_3}.
    \end{matrix}
    \label{nonstabmeas}
    \end{equation}
  There is no non-trivial toy observable $g$ with $\pm g$ in all the outcome subgroups (indeed there is no such $g$ for the first three outcomes alone). Therefore there is no $g$ that can be the first in the sequence of toy observable measurements.
  \end{proof}

  Since the only valid measurements on elementary systems are toy observables (Lemma~\ref{onetwomeas}), the above set of eight uncorrelated (product) states cannot be distinguished using local measurements alone. The same situation arises in quantum mechanics, where it is known as ``non-locality without entanglement'' \cite{qnwe}. Indeed equation 43 in \cite{qnwe} is very similar to \eqref{nonstabmeas}. As noted in \cite{spek}, the presence of this effect in the toy theory, which is local by construction, may indicate that ``non-locality without entanglement'' is a misnomer.
  
  A similar structure also arises in so-called ``boxworld'' \cite{boxworld}: the measurement in their proof of Theorem 3 is also reminiscent of \eqref{nonstabmeas}.

\section{Mixtures and superpositions}\label{mixsec}
\subsection{Qubits} 
I have been unable to find procedures for constructing mixtures (convex combinations) and superpositions of qubit stabilizer states in the literature, so I briefly develop the theory here.

Let $S$ and $S'$ be two stabilizer subgroups on $n$ qubits. When $g \in S$ if and only if $\pm g \in S'$, call $S'$ a \emph{rephasing} of $S$. In that case, either $S = S'$, or they represent orthogonal states and can be written $S = \gen{g_1, \dotsc, g_{l-1}, g_l}$, $S' = \gen{g_1, \dotsc, g_{l-1}, -g_l}$. For example, $S' = \gen{Z_1,-Z_2}$ is a rephasing of $S = \gen{-Z_1,Z_2}$. For each list of generators we then multiply first by the second to obtain $S' = \gen{-Z_1Z_2, -Z_2}$, $S = \gen{-Z_1Z_2, Z_2}$.

Consider the mixture $\frac12(\rho_S + \rho_{S'})$. It is easy to check that the result is equal to to $\rho_T$ for some stabilizer subgroup $T$ if and only if $S'$ is a rephasing of $S$. In that case $T = S \cap S'$ and $P_T \propto P_S + P{_S'}$. For example, with $S$ and $S'$ as above we would have $T = \gen{-Z_1Z_2}$.

The rephasing condition also applies to coherent superpositions, provided we restrict attention to orthogonal states. This excludes somewhat irregular cases such as $\ket{00} - \ket{++}$, which happens to be a stabilizer state.
\begin{lemma}[proven in Appendix~\ref{proofs}]
  \label{qubitsuper}
  Let $S, S', T$ be stabilizer subgroups of size $2^n$. Let $S$ and $S'$ stabilize orthogonal states $\ket{\psi}$ and $\ket{\psi'}$ respectively. Then there exists a $\theta \in \mathbb{R}$ such that $T$ stabilizes $\frac1{\sqrt2}(\ket{\psi} + e^{i\theta}\ket{\psi'})$ if and only if we can write $S = \gen{g_1, \dotsc, g_{n-1}, g_n}$, $S' = \gen{g_1, \dotsc, g_{n-1}, -g_n}$ and $T = \gen{g_1, \dotsc, g_{n-1}, h}$ for some $h \notin S \cup S'$.
\end{lemma}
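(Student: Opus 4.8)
The plan is to reduce everything to the two-dimensional subspace $V$ that is stabilized by the generators $S$, $S'$ (and $T$) have in common, and to recognize the stated condition on the generating sets as precisely the statement that $S'$ is a rephasing of $S$ in the sense of the preceding discussion: that $S$ and $S'$ share $n-1$ generators and differ only in the sign of one. I would prove the two directions separately.

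For the ``if'' direction, suppose $S=\gen{g_1,\dotsc,g_{n-1},g_n}$, $S'=\gen{g_1,\dotsc,g_{n-1},-g_n}$ and $T=\gen{g_1,\dotsc,g_{n-1},h}$ with $h\notin S\cup S'$. First I would note that $g_1,\dotsc,g_{n-1}$ generate a stabilizer subgroup $R$ of size $2^{n-1}$ whose stabilized subspace $V$ is two-dimensional and contains both $\ket\psi$ and $\ket{\psi'}$; since $g_n\ket\psi=\ket\psi$ and $g_n\ket{\psi'}=-\ket{\psi'}$, these form the eigenbasis of $g_n$ on $V$. Because $h$ commutes with $g_1,\dotsc,g_{n-1}$ it preserves $V$. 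The crux is to show $h$ anticommutes with $g_n$: if instead it commuted, then $h$ would commute with all of $S$, so (the centralizer of a maximal stabilizer subgroup being $\{i^k g : g\in S\}$) $h=\pm g$ for some $g\in S$, and checking the cases according to whether $g$ involves $g_n$ contradicts either $h\notin S\cup S'$ or the independence of the generators of $T$ (it would force $-I^{\otimes n}\in T$). Granting the anticommutation, $h$ acts on $V$ as an antidiagonal Hermitian involution $\left(\begin{smallmatrix}0 & e^{-i\theta}\\ e^{i\theta}&0\end{smallmatrix}\right)$ in the basis $\{\ket\psi,\ket{\psi'}\}$, whose $+1$ eigenvector is $\tfrac1{\sqrt2}(\ket\psi+e^{i\theta}\ket{\psi'})$; as this vector is also fixed by $R$, it is the unique state stabilized by $T$.

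For the ``only if'' direction I would start from a $T$ of size $2^n$ stabilizing $\ket\chi=\tfrac1{\sqrt2}(\ket\psi+e^{i\theta}\ket{\psi'})$ and classify its elements via the requirement $\bra\chi g\ket\chi=1$, using that $\bra\psi g\ket\psi,\bra{\psi'}g\ket{\psi'}\in\{-1,0,1\}$ and $\abs{\bra{\psi'}g\ket\psi}\le1$. The key observation is that if $g\in T$ also lies in $S$ then $g\ket\psi=\ket\psi$ together with $g\ket\chi=\ket\chi$ forces $g\ket{\psi'}=\ket{\psi'}$, so $g\in S\cap S'=:R$ (and symmetrically for $S'$); the only remaining possibility is that $g$ swaps $\ket\psi\leftrightarrow\ket{\psi'}$ up to phase. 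Since the product of two such swaps lies in $R$, the swapping elements form a single coset, giving $T=R\sqcup Rh$ and hence $\abs R=2^{n-1}$, so $R$ stabilizes exactly $V=\mathrm{span}(\ket\psi,\ket{\psi'})$. Finally, any $z\in S\setminus R$ commutes with $R$ and so preserves $V$; being a Hermitian involution with $z\ket\psi=\ket\psi$ it acts as $\mathrm{diag}(1,-1)$ there, and since $z\notin S'$ we must have $z\ket{\psi'}=-\ket{\psi'}$, i.e. $-z\in S'$. Taking $g_n=z$, choosing generators $g_1,\dotsc,g_{n-1}$ of $R$, and letting $h$ be any element of $T\setminus R$ then yields $S=\gen{g_1,\dotsc,g_{n-1},z}$, $S'=\gen{g_1,\dotsc,g_{n-1},-z}$ and $T=\gen{g_1,\dotsc,g_{n-1},h}$ as required.

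I expect the ``only if'' direction to be the main obstacle, and within it the classification of the elements of $T$ by expectation value — in particular the argument that $T\cap S=S\cap S'$, forcing $\abs{S\cap S'}=2^{n-1}$ and hence that $S$ and $S'$ share $n-1$ generators. The ``if'' direction is essentially a single two-dimensional computation once the anticommutation of $h$ with $g_n$ is secured.
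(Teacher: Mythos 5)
Your proposal is correct, but it is organized quite differently from the paper's proof, so a comparison is worthwhile. For the ``only if'' direction the paper proves the rephasing property \emph{first}, by a single targeted contradiction: if some $g \in S$ had $\pm g \notin S'$, then $\Braket{\phi(\theta)|g|\phi(\theta)} = \frac12$, impossible since the superposition is a stabilizer state; it then takes shared generators $g_1, \dotsc, g_{n-1}$, notes each stabilizes $\ket{\phi(\theta)}$ and hence lies in $T$, and excludes $h \in S \cup S'$ by observing that $-h$ would lie in the other subgroup, giving $h\ket{\phi(\theta)} = \ket{\phi(\theta+\pi)}$. You instead classify \emph{every} element of $T$ (fixes both $\ket\psi$ and $\ket{\psi'}$, or swaps them up to phase) and extract the rephasing structure as output: the coset decomposition $T = R \sqcup Rh$ forces $\abs{S \cap S'} = 2^{n-1}$, and the two-dimensional subspace $V$ yields $-z \in S'$ constructively rather than by contradiction. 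In the ``if'' direction the paper gets $h\ket{\psi} = e^{i\theta}\ket{\psi'}$ abstractly (conjugation by $h$ maps $S$ to $S'$), where you compute the antidiagonal matrix of $h$ on $V$; these are the same calculation in different dress, but your case analysis establishing the anticommutation of $h$ with $g_n$ (via the centralizer of a maximal stabilizer group) is actually slightly more careful than the paper's one-line appeal to the maximal number of independent generators, which silently needs $-h \notin S$ to be ruled out --- your observation that $h = -r$ with $r \in \gen{g_1, \dotsc, g_{n-1}}$ would put $-I^{\otimes n}$ in $T$ is exactly the missing case. One small point to tighten in your write-up: the step $T = R \sqcup Rh$ with $R = S \cap S'$ tacitly assumes $S \cap S' \subseteq T$, which deserves the one-line remark that since $\abs{T} = 2^n$ is maximal, $T$ is the \emph{entire} group of Pauli observables fixing the superposition, hence contains everything fixing both $\ket\psi$ and $\ket{\psi'}$ (alternatively, your classification gives $T \cap (S \cup S') \subseteq S \cap S'$, and the coset count $\abs{T \cap S \cap S'} = 2^{n-1}$ forces equality because $S \cap S'$ is a proper subgroup of $S$). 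What your route buys is a complete structural picture of $T$ and an explicit appearance of the phase $\theta$; what the paper's buys is brevity, since the rephasing language lets it avoid the element-by-element classification.
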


There are always four such superpositions:
\begin{lemma}[proven in Appendix~\ref{proofs}]
  \label{supercount}
  Let $S = \gen{g_1, \dotsc, g_{n-1}, g_n}$ and $S' = \gen{g_1, \dotsc, g_{n-1}, -g_n}$. Then there are exactly four distinct stabilizer subgroups of the form $\gen{g_1, \dotsc, g_{n-1}, h}$ with $h \notin S \cup S'$.
\end{lemma}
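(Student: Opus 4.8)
The plan is to pass to check vectors, replacing the subgroup $K = \gen{g_1, \dotsc, g_{n-1}}$ of size $2^{n-1}$ by the $(n-1)$-dimensional subspace $W = \{r(g) : g \in K\}$ of $(\mathbb{Z}_2)^{2n}$. A Hermitian $h$ commutes with every element of $K$ exactly when $r(h)$ is orthogonal to all of $W$ under the bilinear form that vanishes iff two elements commute; call the space of such check vectors $W^\perp$. Since the generators of $K$ mutually commute, $W$ is isotropic, so $W \subseteq W^\perp$ and $\dim W^\perp = 2n - (n-1) = n+1$. Hence the quotient $W^\perp / W$ has dimension $(n+1)-(n-1)=2$, giving exactly four cosets: the trivial one $W$ and three nonzero cosets. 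The only invariant of $T = \gen{K,h}$ that I need is the coset $W + r(h)$ that its ``new'' generator occupies.

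First I would pin down which cosets can occur. If $r(h) \in W$ then $h = \pm k$ for some $k \in K$, giving either $T = K$ (too small) or $-I^{\otimes n} \in T$ (invalid); so the trivial coset is excluded and every admissible $T$ automatically has size $2^n$. Writing $a = W + r(g_n)$ for the coset of $g_n$, I would note that $S$ and $S'$ consist precisely of the operators whose check vectors lie in $W \cup (W+a)$, and that $S \cup S'$ contains \emph{both} signs of every operator with check vector in $W+a$ (one sign from $S$, the other from $S'$). Therefore any $h$ with $r(h) \in W+a$ already lies in $S \cup S'$ and is forbidden, whereas any $h$ whose coset is one of the two remaining nonzero cosets $b, c$ satisfies $h \notin S \cup S'$ automatically, since its check vector avoids $W \cup (W+a)$ entirely.

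Next I would count the valid $T$ one coset at a time. Because a stabilizer subgroup contains only operators of phase $\pm 1$, the operators with a fixed check vector that may appear are a Hermitian operator and its negation. Fixing a Hermitian $h_0$ with $r(h_0) \in W+b$, the full set of Hermitian operators with check vector in that coset is $\{\pm k h_0 : k \in K\}$, and $\gen{K, kh_0} = \gen{K, h_0}$ while $\gen{K, -kh_0} = \gen{K, -h_0}$, so only two subgroups arise. Both are valid, as $-I^{\otimes n} \in \gen{K, \pm h_0}$ would force $r(h_0) \in W$, and they are distinct because each stabilizer subgroup contains a unique sign of the operator with check vector $r(h_0)$. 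The identical argument on coset $c$ gives two more, and subgroups from different cosets differ because their check-vector content $W \cup (W+b)$ and $W \cup (W+c)$ differ. Summing yields exactly four.

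I expect the main obstacle to be the sign bookkeeping in the middle step: establishing cleanly that $S \cup S'$ supplies both signs on the coset $a$ (so it contributes nothing) yet neither sign on $b$ and $c$, and separately that $\gen{K, h_0}$ and $\gen{K, -h_0}$ are genuinely distinct and both valid. Everything else reduces to the dimension count in $W^\perp / W$. A reassuring sanity check is $n=1$: here $K = \{I\}$, the state $\gen{g_n} = \gen{Z}$ forbids $\gen{\pm Z}$, and the four survivors are exactly $\gen{X}, \gen{-X}, \gen{Y}, \gen{-Y}$.
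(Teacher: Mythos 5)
Your proof is correct, but it takes a genuinely different route from the paper's. The paper follows Proposition 1 of \cite{ag}: it counts the admissible choices of $h$ (must commute with $g_1, \dotsc, g_{n-1}$, giving $4^n/2^{n-1}$ check vectors; must avoid $\gen{g_1,\dotsc,g_{n-1}}$ and $S \cup S'$; two signs each), obtaining $G = 2^{n+1}$, then counts the number $A = 2^n - 2^{n-1} = 2^{n-1}$ of choices yielding a fixed subgroup, and concludes $G/A = 4$. You instead classify the subgroups directly via the symplectic quotient $W^{\perp}/W$: of its four cosets, the trivial one is excluded ($h = \pm k$ gives $h \in S$ or $-I^{\otimes n} \in T$), the coset of $g_n$ is excluded because $S$ and $S'$ between them supply both signs there, and each of the two remaining cosets carries exactly the two subgroups $\gen{K, h_0}$ and $\gen{K, -h_0}$. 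The underlying facts coincide --- your $\abs{W^{\perp}} = 2^{n+1}$ is precisely the paper's count $4^n/2^{n-1}$ of commuting options, and your per-coset sign analysis is the paper's division by $A$ made explicit --- but your version buys an explicit parametrization of the four superpositions (a nontrivial coset distinct from that of $g_n$, together with a sign), and it verifies a point the $G/A$ template leaves implicit: that every element of a valid $T$ outside $\gen{g_1, \dotsc, g_{n-1}}$ is itself an admissible $h$, so each subgroup really is counted with the same multiplicity. The paper's version is shorter and reuses a standard counting argument. One cosmetic remark: in your trivial-coset case, $h = k \in K$ is already ruled out by the hypothesis $h \notin S \cup S'$ (since $K \subseteq S$), so the ``too small'' observation is not needed; this does not affect the count.
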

For example, take $S = \gen{Z_1, Z_2}$ and $S' = \gen{-Z_1, -Z_2}$, which correspond to $\Ket{00}$ and $\Ket{11}$ respectively. Then we obtain four distinct superpositions of the form $T = \gen{Z_1Z_2, h}$ by setting $h$ to $\pm X_1X_2$ or $\pm X_1Y_2$, corresponding to $\frac1{\sqrt 2}(\ket{00} \pm \ket{11})$ or $\frac1{\sqrt 2}(\ket{00} \pm i\ket{11})$ respectively.

\subsection{Toy theory}
Define rephasing for toy stabilizer subgroups in the same way. Convex combinations then work in exactly the same way as the qubit case. If $S$ and $S'$ represent disjoint states then $P_{S} + P_{S'}$ is a projector onto the union of the epistemic states compatible with $S$ and $S'$, which is how convex combinations are defined in \cite{spek}. Defining convex combinations for any rephasing also allows $S = S'$, which is a trivial extension.

Since the toy theory does not have a direct analogue of Hilbert space there is no inherent definition of a coherent superposition. Instead we simply define it by analogy to the qubit case. Suppose $S$ and $S'$ are two toy stabilizer subgroups for pure states, and $S$ is a rephasing of $S'$. Then either $S = S'$ (in which case we can trivially define $S$ to be a coherent superposition of $S$ and $S'$), or we can write $S = \gen{g_1, \dots, g_{n-1}, g_n}$ and $S' = \gen{g_1, \dots, g_{n-1}, -g_n}$. Call an epistemic state of the form $T = \gen{g_1, \dots, g_{n-1}, h}$ with $h \notin S \cup S'$ a \emph{uniform coherent superposition} of $S$ and $S'$.

This definition generalizes the coherent superpositions for elementary systems found in \cite{spek} to composite systems. For any pair of distinct epistemic states, one being a rephasing of the other, there are four distinct coherent superpositions (using the same proof as for Lemma~\ref{supercount} above). The epistemic states thus obtained contain ontic states from both of the epistemic states in the superposition.

It was shown in Section~\ref{transsec} that for any pair of canonical generating sets, there is a permutation which maps one to the other. Take $S, S'$ and a coherent superposition $T$ written as above. Write a canonical generating set $g_1, \dotsc, g_n, h_1, \dotsc, h_n$. Fix $h_n = h$ ($h$ anticommutes with $g_n$ by the same argument used in the proof of Lemma~\ref{qubitsuper}). The other $h_k$ are irrelevant, provided they satisfy the requirements of a canonical generator set (such $h_k$ can always be found). Another canonical generating set is obtained by changing $g_n$ to $h$ and $h_n$ to $-g_n$. So there exists a permutation that sends $S$ to $T$ and $T$ to $S'$. This is considered in \cite{spek} to be indicative of a coherent superposition.

\section{Examples}\label{examples}
In this section the computational power of the notation is used to demonstrate some similarities of the toy theory to qubit stabilizers not identified in \cite{spek}. On the other hand, to show why not every phenomena involving qubit stabilizers is found in the toy theory, this section concludes with a discussion of the Mermin-Peres square.

\subsection{Counting states and transformations}
The number of pure stabilizer states on $n$ qubits is calculated in \cite{ag} to be
\begin{equation}
    2^n \prod_{k=0}^{n-1} (2^{n-k} + 1).
\end{equation}
Their argument applies equally to toy stabilizers and so the number of pure epistemic states for $n$ elementary systems in the toy theory is identical.

The number of ordered canonical generator sets, and hence the elements of the Clifford group (modulo global phases) is calculated in \cite{caves} to be
\begin{equation}
  2^{2n^2 + 3n}\prod_{k=1}^n (1 - 2^{-2k}).
\end{equation}
The same argument applies equally to toy stabilizers and so, recalling the link between transformations and canonical generator sets outlined in Section~\ref{transsec}, the number of valid reversible transformations in the toy theory is identical.

\subsection{Classical simulation}
  It was already noted in \cite{spek} that the ontic states of the toy theory can be tracked efficiently by a classical computer. Using the stabilizer notation a classical computer can furthermore efficiently track the epistemic state. Indeed the proof in \cite{ag} that the simulation of qubit stabilizer circuits is complete for the classical complexity class $\mathbf{\oplus L}$ (which is believed to be strictly weaker than $\mathbf{P}$, polynomial-time universal classical computation) can now easily be adapted to the toy theory.

\subsection{Graph states}
Let $G$ be a finite simple graph on $n$ vertices. We can associate $G$ with a pure state on $n$ elementary systems as follows. Begin with each system in the state $\gen{\X}$. Apply the two-system permutation that sends $\X_1 \to \X_1\Z_2$, $\X_2 \to \Z_1\X_2$, $\Z_1 \to \Z_1$ and $\Z_2 \to \Z_2$ (analogous to a controlled-Z gate) to each pair of systems connected by an edge on the graph (in any order). The toy stabilizer subgroup for the resulting state is generated by
\begin{equation}
  g_k = \X_k \prod_{l \in N(k)} \Z_l 
\end{equation}
  for $k \in \{ 1, \dotsc, n \}$ where $N(k)$ are the vertices connected to vertex $k$.

  Such states are analogous to graph states on qubits \cite{graph}, and share many qualitative features with them. For example, a $\Z_k$ commutes with all of the generators except for $g_k$. Hence a $\Z$ measurement on the $k$-th system will return $\pm 1$ with equal probability and generators for the new toy stabilizer can be found by replacing $g_k$ with $\pm \Z_k$. By multiplying the $g_l$ with $l \in N(k)$ by this we obtain another list of generators. If the value $-1$ is returned then apply the transformation that sends $\X \to -\X$ and $\Z \to \Z$ to the systems in $N(k)$. The new generators are now those of the graph state for $G$ with the vertex $k$ deleted, along with $\pm \Z_k$. Compare this to the qubit case, Proposition 1 of \cite{graph}. The effect of $\X$ and $\Y$ measurements is also analogous to the qubit case.

  \subsection{Non-example: Mermin-Peres square}
  Having seen all these examples it may tempting to conclude that any feature of the qubit stabilizer formalism has an analogue in the toy theory. This is not the case. The Mermin-Peres square \cite{mermin} of 9 Pauli observables 
  \begin{equation}
    \begin{matrix}
      X_1 & X_2 & X_1X_2 \\
      Y_2 & Y_1 & Y_1Y_2 \\
      X_1Y_2 & Y_1X_2 & Z_1Z_2
    \end{matrix}
    \label{mpsquare}
  \end{equation}
  has the property that every row and column is a set of commuting observables that multiply to give $I^{\otimes 2}$, except for the last row which gives $-I^{\otimes 2}$. Suppose we replace every observable in the square with some pre-determined value $\pm 1$, which is independent of how the observable is measured (non-contextual). To agree with quantum mechanical predictions each row and column of the square must multiply to $1$, except the last row which must multiply to $-1$. Since this is impossible, we conclude that in quantum mechanics observables do not have pre-determined non-contextual values.

  The square of toy observables corresponding to \eqref{mpsquare} has the property that every row and column is a set of commuting observables that multiply to give $\I^{\otimes 2}$. Hence the corresponding values must multiply to $1$, and this does not give rise to a contradiction. This is to be expected since in the toy theory all toy observables do indeed have a pre-determined non-contextual value, which could be calculated if the exact ontic state were known.
  
  \section{Summary}
  The similarities and differences between the qubit stabilizer formalism and Spekkens' toy theory can be summarised as follows. States, transformations, and measurements in their most compressed representation -- independent generators, canonical generator sets, and observables respectively -- appear to be identical in both theories and are manipulated according to the same procedures.
  
  Yet the underlying groups $G_n$ and $P_n$ are by no means identical, and so the results of ``decompression'' -- calculation of the full subgroup of a state, and the effect of a transformation on all observables -- will usually differ. This is the key difference between the two theories.

  With the notation in hand it becomes much easier to make calculations in the theory. This enables proofs of two important facts not shown in \cite{spek}: that the number of epistemic states on $n$ elementary systems is equal to the number of stabilizer states on $n$ qubits, and a similar result for transformations. Whilst dealing with just three elementary systems was previously very difficult, the stabilizer notation makes the consideration of states on a arbitrary number of elementary systems tractable.
\begin{acknowledgments}
I am grateful to my supervisors Terry Rudolph and Jonathan Barrett for many helpful discussions. I am particularly indebted to Terry for devising the representation of $G_n$ used here. I acknowledge financial support from the EPSRC.
\end{acknowledgments}
\appendix
\section{Proofs}\label{proofs}
\begin{proof}[Proof of Lemma~\ref{samegen}]
  The second statement is equivalent to ``$m(g_1), m(g_2), \dotsc, m(g_l)$ commute, have linearly independent check vectors, and each square to $I^{\otimes n}$'' \cite{caves}. By definition check vectors and commuting conditions are preserved by $m$. Furthermore ${m(g)}^2 = I^{\otimes n}$ since the only elements of $P_n$ that don't square to $I^{\otimes n}$ are those with phases $\alpha  \in \{i,-i\}$, which aren't in the image of $m$.

  Hence the second statement is equivalent to ``$g_1, g_2, \dotsc, g_l$ commute and have linearly independent check vectors''. The proof is completed by showing that this is equivalent to the first statement using a similar argument to the qubit case \cite{caves}. First note that $g_1, g_2, \dotsc, g_l$ commute if and only if $\gen{g_1, g_2, \dotsc, g_l}$ commute.
   
  Suppose $-\I^{\otimes n} \in \gen{g_1, g_2, \dotsc, g_l}$. Then the check vector of $-\I^{\otimes n}$ can be written is a linear combination of the check vectors of $g_1, g_2, \dotsc, g_l$. But the check vector of $-\I^{\otimes n}$ is $0$ and so the check vectors of $g_1, g_2, \dotsc, g_l$ are linearly dependent. Suppose that the first statement holds but the check vectors of $g_1, g_2, \dotsc, g_l$ are linearly dependent. Then the check vector of one of them, say of $g_1$, can be written as a linear combination of the others. That means that either $g_1$ or $-g_1$ can be written as a product of the others. The first possibility contradicts the assumption of independent generators, and since $g_1(-g_1) = -\I^{\otimes n}$ the second contradicts the assumption that the subgroup does not contain $-\I^{\otimes n}$.
\end{proof}
\begin{lemma}An epistemic state is permitted by the toy theory if and only if it corresponds to a toy stabilizer subgroup.\label{balancestates}\end{lemma}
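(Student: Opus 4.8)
The plan is to make Spekkens' ``knowledge balance principle'' precise in the phase-space language already implicit in the check-vector formalism, and then to show that the principle is equivalent to the commuting (isotropy) condition defining a toy stabilizer subgroup. First I would fix the dictionary: identify the $4^n$ ontic states with the points of $(\mathbb{Z}_2)^{2n}$ (two bits per elementary system, the $\Z$-bit and the $\X$-bit), so that each toy observable $g$, up to its sign, acts as the yes/no question $\vec v \mapsto r(g)\cdot \vec v \bmod 2$, and the paper's commutation relation becomes the standard symplectic form on $(\mathbb{Z}_2)^{2n}$. An agent's knowledge of the answers to a set of such questions picks out the affine subspace of ontic states consistent with those answers, and maximal ignorance of everything else forces the epistemic state to be the uniform distribution on that subspace. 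Writing $C$ for the $\mathbb{Z}_2$-span of the check vectors of the known observables, the state is exactly the support of $P_S$ where $S$ is the group of all toy observables with a definite value, and $\dim C = \log_2\abs{S}$ (injectivity of $g\mapsto r(g)$ on $S$ using $-\I^{\otimes n}\notin S$).

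Next I would translate Spekkens' measure of knowledge. The amount of knowledge is $\dim C$ and the amount of ignorance is $2n-\dim C$, so the requirement that knowledge never exceed ignorance reads $\dim C \le n$. The key point, and the heart of the proof, is that the principle demands more than this inequality: a valid state of knowledge must admit a \emph{canonical} set of $2n$ independent questions splitting into the known ones and a complementary block about which the agent is \emph{jointly} maximally ignorant. I would argue that such a splitting exists precisely when the known questions are mutually compatible, i.e.\ when $C$ is isotropic for the symplectic form (equivalently, all elements of $S$ commute). A naive per-subsystem bit count is \emph{not} sufficient: one can exhibit a coset, such as the one fixed by $\Y_1$ and $\X_1\X_2$, whose every subsystem marginal looks balanced yet whose two known observables anticommute, and the correct reading of the principle must rule it out because two conjugate questions cannot both sit in the ``known'' half of a canonical set.

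For the forward direction (permitted $\Rightarrow$ stabilizer) I would therefore show that if two known observables anticommuted they would occupy conjugate slots of every canonical question set, so no admissible known/unknown splitting could exist, contradicting the principle; hence the known observables commute, and nonemptiness of the support gives $-\I^{\otimes n}\notin S$, so $S$ is a toy stabilizer subgroup. For the converse I would start from a toy stabilizer subgroup $S$, use Lemma~\ref{samegen} to conclude $\dim C = \log_2\abs{S}\le n$, and extend the isotropic subspace $C$ to a Lagrangian one to produce an explicit canonical set in which the known questions form a commuting block with all their conjugates unknown; this realizes $S$ as a balanced state of knowledge, while $-\I^{\otimes n}\notin S$ guarantees the coset is nonempty.

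The main obstacle I anticipate is the faithful formalization of Spekkens' ``amount of knowledge'' so that it forces isotropy rather than merely the per-subsystem inequality; the anticommuting example above shows that the weaker reading is genuinely inadequate. Once the canonical-set formulation is pinned down and its conjugate-pair structure is matched to the symplectic form, both directions reduce to standard linear algebra over $\mathbb{Z}_2$ (isotropic subspaces, their dimension bound, and Lagrangian completions) together with the correspondence already established in Lemma~\ref{samegen}.
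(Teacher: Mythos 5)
There is a genuine gap, and it is the one you yourself flag as the ``main obstacle'': your formalization of the knowledge balance principle assumes what the lemma has to prove, in two distinct ways. First, you posit at the outset that an epistemic state is the uniform distribution on an affine subspace of $(\mathbb{Z}_2)^{2n}$ cut out by linear questions. But in the toy theory an epistemic state is an \emph{arbitrary} subset of the ontic space, and the hard ``only if'' direction of the lemma is precisely to show that validity forces this coset form---ruling out, e.g., $1\vee2\vee3$ or the non-coset set $(1\cdot1)\vee(2\cdot2)\vee(3\cdot3)\vee(4\cdot1)$. The paper earns this conclusion by induction: it first proves that every toy observable has probability $0$, $\frac12$ or $1$ in any valid state, then builds independent generators by conditioning on $\X$, $\Y$, $\Z$ measurements of the first system (using that the two conditional states of the remainder must be rephasings of each other), and finally shows that the expectation values of all toy observables determine the distribution uniquely, since the positive-phase elements of $G_n$ are a basis for the diagonal matrices. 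None of this work happens in your sketch because the affine-subspace ansatz presupposes its outcome.

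Second, your proposed criterion---a canonical set of $2n$ questions splitting into the known block plus a jointly-maximally-unknown block exists iff $C$ is isotropic---is false under the natural Spekkens reading of ``canonical set'' as a minimal set of yes/no questions sufficient to determine the ontic state, and your own example defeats it. For the state fixed by $\Y_1$ and $\X_1\X_2$, namely $(1\cdot1)\vee(1\cdot3)\vee(4\cdot2)\vee(4\cdot4)$, the questions $\Y_1,\ \X_1\X_2,\ \X_1,\ \Z_2$ form a minimal sufficient set: the first two are known with certainty, while the last two are jointly uniform over all four answer combinations. So the splitting exists even though the known observables anticommute. To force isotropy you would have to stipulate that canonical sets are symplectic bases and that both members of a conjugate pair can never be known---but that stipulation is the commutation condition in disguise, making the lemma true by definition rather than by proof. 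The paper avoids this trap with a different formalization: validity is defined \emph{inductively}, a composite state being valid iff every outcome of every valid measurement on any subsystem leaves a valid state of the complement. Under that definition your example is excluded honestly (measuring $\X$ on the second system with outcome in $\gen{\X}$ collapses the first system to the single ontic state $1$), and commutation is derived rather than assumed. Your converse direction has the matching omission: extending $C$ to a Lagrangian verifies only global balance, whereas the ``if'' direction must also check the subsystem/measurement-update conditions, which the paper does by showing the conditional subgroup $S_B$ after a measurement on $A$ is again a toy stabilizer subgroup.
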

\begin{proof}
In order to prove that the toy stabilizer subgroups correspond exactly to the valid epistemic states of Spekkens' toy theory, we need a precise formulation of the valid states for $n$ elementary systems. Define them inductively as follows. An \emph{epistemic state} is a subset of the ontic states (which we assign a uniform probability distribution over). For $n=1$ a \emph{valid epistemic state} consists of either two or four ontic states. Suppose the valid states are defined for $n \leq k$. Define a \emph{valid measurement} on $n \leq k$ elementary systems as the discovery of which member of a partition of the ontic states into valid epistemic states the ontic state lies in. A valid epistemic state on $n=k+1$ elementary systems is such that for any non-trivial partition of the elementary systems into two subsystems $A$ and $B$, any possible outcome of any valid measurement of $A$ (which is assumed not to affect the ontic state of $B$) results in a valid epistemic state for $B$. 

The definition of the ``knowledge balance principle'' in \cite{spek} is not as mathematical as the one above. Since the valid epistemic states of an elementary system are stated explicitly in \cite{spek}, the definitions certainly agree for $n=1$. For $n>1$ the inductive structure made explicit above is implicit in \cite{spek}: examples are given of particular epistemic states of a composite system, such that a measurement on one subsystem leads to an invalid epistemic state for the other. This is always taken as proof that the epistemic state of the composite system was invalid, the validity of the definitions for the smaller subsystems is never questioned.

Also implicit in \cite{spek} is that for any epistemic state on $n$ elementary systems, there must exist a canonical set such that the agent is certain about the answers to $l$ of the questions, and completely ignorant of the remaining $2n - l$.  Hence the number of ontic states compatible with the epistemic state must be $2^{(2n - l)}$. This implicit requirement rules out the epistemic state $1 \vee 2 \vee 3$.  Furthermore, by the knowledge balance principle we must have $l \leq n$. Except for when $n=1$, this requirement might appear to be missing from the above definition. But since the Lemma can be proven without it, and any state corresponding to a toy stabilizer subgroup certainly satisfies this requirement, it turns out that there is no need to include it in the definition.
  
  We now proceed with the proof using induction on $n$. The only toy stabilizer subgroups for $n=1$  are $\gen{}, \gen{\X}, \gen{-\X}, \gen{\Y}, \gen{-\Y}, \gen{\Z}$ and $\gen{-\Z}$ which correspond to the 7 valid epistemic states for an elementary system.

  Suppose the claim has been proved for all $n \leq k$ and consider an epistemic state of an $n=k+1$ system. For the ``if'' part let $S$ be a toy stabilizer subgroup. Consider a non-trivial partitioning into subsystems $A$ and $B$ of sizes $n_A$ and $n_B$. Since $n_A, n_B < k$ we can assume the lemma holds for each. Therefore some outcome of some valid measurement of $A$ corresponds to a valid epistemic state and hence a toy stabilizer subgroup for $A$, which we denote $N'$. Tensor the operators in $N'$ with an $\I^{\otimes n_B}$ to obtain the corresponding knowledge about the entire system, denoted $N$. It is now known that before the measurement the ontic state was in the support of $P_NP_S = P_{\gen{N \cup S}}$.
  
  $\gen{N \cup S}$ may not represent a valid epistemic state since we have not yet taken into account the disturbance to $A$ due to the measurement. But since system $B$ is not disturbed the new epistemic state for $B$ alone will be given by the subgroup $S_B$ of $\gen{N \cup S}$ that applies to $B$ alone. It is useful to note that since $G_n$ is abelian, $\gen{N \cup S} = \{ gh | g \in N, h \in S\}$.
  
  Since neither $N$ nor $S$ contain $-\I^{\otimes n}$, the only way $-\I^{\otimes n}$ could be in $\gen{N \cup S}$ is if there is some $g \in N$ with $-g \in S$. But then the measurement would never have returned the result it did. Hence $-\I^{\otimes n} \notin \gen{N \cup S}$.

  Let $g, h \in S_B$, and decompose $g=ab$, $h=cd$ with $a, c \in N$ and $b, d \in S$. We have that $a$ commutes with $c$ and $b$ commutes with $d$. Since $a, c \in N$ are results from a measurement of $A$ alone we can write $a = a_A \otimes \I^{\otimes n_B}$ and $c = c_A \otimes \I^{\otimes n_B}$ with $a_A, c_A \in G_{n_A}$. Since $g, h \in S_B$ apply to $B$ alone we must have $b = a_A \otimes b_B$ and $d = c_A \otimes d_B$ with $b_B, d_B \in G_{n_B}$. This demonstrates that $a$ commutes with $d$ and $b$ commutes with $c$ also. Therefore $g$ and $h$ commute. Hence $S_B$ is a toy stabilizer subgroup, which represents a valid epistemic state by the inductive hypothesis. Therefore $S$ represents a valid epistemic state.

  For the ``only if'' part consider some valid epistemic state. First I show that the probability of obtaining an outcome from the measurement of a toy observable is always either $0$, $\frac12$ or $1$. Let $g^+$ be the event of obtaining the $+1$ outcome from measuring some $g \in G_n$. Decompose $g = a \otimes b$ with $a \in G_1$ and $b \in G_{n-1}$. If $g \vec v = \vec v$ then, writing $\vec v = \vec v_a \otimes \vec v_b$ we must have $a \vec v_a = \vec v_a, b \vec v_b = \vec v_b$ or $a \vec v_a = -\vec v_a$ and $b \vec v_b = -\vec v_b$. Hence
  \begin{multline}
    P(g^+) = P(a^+ \cap b^+) + P(a^- \cap b^-) \\= P(a^+)P(b^+|a^+) + P(a^-)P(b^-|a^-).
  \end{multline}
  All of the probabilities in the last expression are $0$, $\frac12$ or $1$ by the inductive hypothesis and the fact that measuring $a$ will result in a valid epistemic state for the other subsystem. Hence the only way $P(g^+)$ could fail to be $0$, $\frac12$ or $1$ is if $P(a^+) = P(a^-) = \frac12$ and $(P(b^+|a^+), P(b^-|a^-)) \in \{(0, \frac12), (\frac12, 0), (\frac12, 1), (1, \frac12)\}$. But by the inductive hypothesis $P(b^+) \in \{0, \frac12, 1\}$, and we have
  \begin{equation}
    P(b^+) = P(b^+|a^+)P(a^+) + P(b^+|a^-)P(a^-),
  \end{equation}
  which would give a contradiction in each of those four cases.

  We now construct a list of independent generators for the toy stabilizer subgroup corresponding to the valid epistemic state as follows. Identify the first elementary system as subsystem $A$ and the rest as a subsystem $B$. Consider an $\I$ measurement on system $A$, which is valid by the inductive hypothesis. This is certain to return the outcome 1 and results in a valid epistemic state for $B$. By the inductive hypothesis this corresponds to a toy stabilizer subgroup $S_B$, which will have some list of independent generators. Tensor each element of this list with $\I$ to obtain the first generators for $S$.
  
  Next consider an $\gen{\X}$ versus $\gen{-\X}$ measurement on $A$ (i.e. a measurement of the $\X$ toy observable). If we know that only a single outcome, $\pm 1$ is possible then doing the measurement cannot give us any knowledge about system $B$ that we didn't already have. Therefore we just need to add $\pm \X_A \otimes \I^{\otimes(n-1)}$ to our generators for $S$ (it is clear that this commutes with, and has a linearly independent check vector from, the existing elements of the list).
  
  If either outcome is possible then they must result in valid epistemic states and hence toy stabilizer subgroups $S_B^{\pm}$. Suppose there is a $g \in S_B^+$ with neither $g$ nor $-g$ in $S_B^-$. Recall that the measurement of $A$ does not disturb $B$. Consider a $g$ measurement on system $B$ and denote the event of the $+1$ outcome $g^+$. Denote the event of a $\pm 1$ outcome of the $\X$ measurement on $A$ by $\X^{\pm}$. Then we have
  \begin{multline} P(g^+) = P(g^+|\X^+) \times P(\X^+) + P(g^+|\X^-) \times P(\X^-) \\= 1 \times \frac12 + \frac12 \times \frac12 = \frac34 \end{multline}
    contradicting the earlier proof that $P(g^+) \in \{0, \frac12, 1\}$.

    There is a similar contradiction if there is $g \in S_B^-$ with neither $g$ nor $-g$ in $S_B^+$. Hence (in the language of Section~\ref{mixsec}) $S_B^-$ is a rephasing of $S_B^+$. Therefore either $S_B^+ = S_B^-$ (in which case the measurement provides no information about $B$ and we do nothing), or we can write a list of independent generators for $S_B^\pm$ as $g_1, \dotsc, g_{l-1}, \pm g_l$. $g_1, \dotsc, g_{l-1}$ hold for either outcome so they must already be in $S$. Hence we just need to add $\X \otimes g_l$ to our generators for $S$. It is clear that the list still commutes and has linearly independent check vectors.

Now consider a $\Y$ measurement on $A$ and repeat the process. If there is a new $g_l$, it must anticommute with the old (if present), since otherwise we could measure two commuting toy observables and determine the value of both $\X$ and $\Y$ on system $A$. Hence the new generator $\Y \otimes g_l$ commutes with the one added in the previous step, and certainly commutes with all the rest and has a linearly independent check vector.

Finally repeat the process for a $\Z$ measurement on $A$, but don't add a generators with a check vector linearly dependent on the existing ones. Any such element will already be in the subgroup.

Consider the measurement of some toy observable $g$ with neither $g$ nor $-g$ in $S$. It must either return one outcome with certainty, or either outcome with equal probability. In the former case it will be possible to find some contradiction with the above procedure, i.e. to find a reason why $g$ or $-g$ would have been added to $S$. The latter case is exactly what is predicted by the toy stabilizer subgroup $S$.

We have now constructed a toy stabilizer subgroup $S$ whose corresponding epistemic state makes identical predictions about the expectation values of all toy observables. It remains to show that that there is only one epistemic state with this property and hence $S$ represents exactly the epistemic state we began with. This is analogous to the fact that knowing the expectation values of every Pauli observable uniquely determines a quantum state, and the proof is similar. We note that the elements $G_n^+$ of $G_n$ that have $\alpha = 1$ are a basis for the real vector space of real diagonal $4^n \times 4^n$ matrices. This is true because there are $4^n$ of them, and they are orthogonal under the trace inner product and hence linearly independent. Hence any probability distribution over the ontic states, written as a diagonal real $4^n \times 4^n$ matrix, is a linear combination of the $G_n^+$ where the coefficients are proportional to the expectation values of those observables.
\end{proof}
\begin{proof}[Proof of Lemma~\ref{onetwomeas}]
    For one system this can be seen by inspection. 
    
    Consider a four-outcome valid measurement on two systems and represent the outcomes, which are valid epistemic states, by toy stabilizer subgroups $S_1, S_2, S_3, S_4$. I claim there are three toy observable $g, h^\pm$ with $\{g, h^+\}$, $\{g, -h^+\}$, $\{-g, h^-\}$, $\{-g, -h^-\}$ each contained in one of the $S_k$. Hence the measurement can be implemented by measuring $g$, and then, based on the outcome $\pm 1$, measuring $h^\pm$. For example if $S_1 = \gen{\X_1\X_2, \Y_1\Y_2}$, $S_2 = \gen{-\X_1\X_2,-\Y_1\Y_2}$, $S_3 = \gen{\Z_1, -\Z_2}$ and $S_4 = \gen{-\Z_1, \Z_2}$ we can take $g = \Z_1\Z_2$, $h^+ = \X_1\X_2$ and $h^- = \Z_1$.

    First note that since the $S_k$ are disjoint and cover all the ontic states
    \begin{equation}
      P_{S_1} + P_{S_2} + P_{S_3} + P_{S_4} = \I^{\otimes 2}.
      \label{sumident}
    \end{equation}
    Denote $S_1 = \{ \I^{\otimes 2}, a, b, ab \}$, where $a$ and $b$ are independent generators of $S_1$. By \eqref{sumident}  $-a$, $-b$ and $-ab$ must appear in the other $S_k$. Suppose without loss of generality that $-a$ appears in $S_2$ and denote $S_2 = \{ \I^{\otimes 2}, -a, c, -ac \}$. If $c = b$ then by \eqref{sumident} we have $-b \in S_3, S_4$ and so we can take $g = b$. Similarly if $c = -b$ take $g = ab$, if $c = ab$ take $g = ab$ and if $c = -ab$ take $g = b$. Suppose $c$ is otherwise. Since $-b$, $-c$ and $-ac$ must be in $S_3$ or $S_4$, two of them must be in the same subgroup. If $-c$ and $-ac$ are in the same subgroup then $a$ is also in it and so by \eqref{sumident} we can take $g =a$. If $-b$ and $-c$ are in the same subgroup then they must commute. But then $\{ a, b, c \}$ would be independent generators of a toy stabilizer subgroup, even though the maximum length of such a list is $n=2$. Similarly if $-b$ and $-ac$ being in the same subgroup creates a contradiction using $\{ a, b, ac \}$.

    Once we have found $g$, use \eqref{sumident} to verify that $g$ must be in two of the subgroups and $-g$ must be in the other two. Take the two subgroups with $g$ in. To ensure that they represent disjoint states there must be some $h^+$ in one with $-h^+$ in the other. Define $h^-$ similarly.

    The proofs for a valid measurement with two or three outcomes are simpler, but use similar observations.
  \end{proof}
\begin{proof}[Proof of Lemma~\ref{qubitsuper}]
  We begin with the ``if'' part. For any $\theta \in \mathbb{R}$ and $l \in \{ 1, \dotsc, n-1\}$ we have $g_l \ket{\psi} = \ket{\psi}$ and $g_l\ket{\psi'} = \ket{\psi'}$, hence
  \begin{equation}
    g_l \frac1{\sqrt 2}(\ket{\psi} + e^{i\theta}\ket{\psi'}) = \frac1{\sqrt 2}(\ket{\psi} + e^{i\theta}\ket{\psi'}).
  \end{equation}

  Next calculate the effect of $h$ on $\ket{\psi}$. Since $T$ is a stabilizer subgroup $h$ must commute with $g_1, \dotsc, g_{n-1}$. But it must anticommute with $g_n$ since otherwise $g_1, \dotsc, g_{n_1}, g_n, h$ would be a list of $n+1$ independent generators of a stabilizer subgroup, contradicting the maximum length of such a list being $n$. Therefore under conjugation by $h$, $S$ is mapped to $S'$. Hence $h\ket{\psi} = e^{i\theta}\ket{\psi'}$ for some $\theta \in \mathbb{R}$. Since $h$ is Hermitian and unitary we also have $he^{i\theta}\ket{\psi'} = \ket{\psi}$. Therefore $h\frac1{\sqrt 2}(\ket\psi + e^{i\theta}\ket{\psi'}) = \frac1{\sqrt2}(\ket\psi + e^{i\theta}\ket{\psi'})$, and we have that $T$ stabilizes this state.
  
  Now for the ``only if'' part. Suppose $S'$ is not a rephasing of $S$. Then there exists some $g \in S$ with $\pm g \notin S'$. Denote $\ket{\phi(\theta)} = \frac1{\sqrt 2}(\ket{\psi} + e^{i\theta}\ket{\psi'}$ and calculate
  \begin{multline}
    \Braket{\phi(\theta) | g | \phi(\theta)} = \frac12(\Braket{\psi|g|\psi} + \Braket{\psi'|g|\psi'} \\+ e^{i\theta}\Braket{\psi|g|\psi'} + e^{-i\theta}\Braket{\psi'|g|\psi}) = \frac12,
  \end{multline}
  which contradicts the expectation value of any Pauli observable in a stabilizer state being $-1, 0$ or $1$. 
  
  So $S'$ is a rephasing of $S$. Since $S' \neq S$ we can therefore write $S = \gen{g_1, \dotsc, g_{n-1}, g_n}$ and $S = \gen{g_1, \dotsc, g_{n-1}, -g_n}$. It is easy to check that for $l \in \{1, \dotsc, n-1\}$ we have $g_l \ket{\phi(\theta)} = \ket{\phi(\theta)}$ and hence $g_l \in T$. Therefore we can take $g_1, \dotsc, g_{n-1}$ as the first $n-1$ independent generators of $T$. Write the final independent generator as $h$. Suppose $h \in S$. Since $h \notin \gen{g_1, \dotsc, g_{n-1}}$, and $S'$ is a rephasing of $S$, we must have $-h \in S$. But then $h\ket{\phi(\theta)} = \ket{\phi(\theta + \pi)}$ contradicting $h$ being a stabilier of $\ket{\phi(\theta)}$. Similarly if $h \in S'$. Therefore $h \notin S \cup S'$.
\end{proof}

\begin{proof}[Proof of Lemma~\ref{supercount}] We closely follow the proof of Proposition 1 in \cite{ag}. The number of distinct stabilizer subgroups of the required form is $G/A$ where $G$ is the number of choices for the final generator $h$, and $A$ is the number of choices of $h$ giving rise to the same subgroup.

  First we calculate $G$. Ignoring phases there are $4^n$ choices in $G_n$. But $h$ must commute with $g_1, \dotsc, g_{n-1}$ which gives $4^n / 2^{n-1}$ options. Also, it must not be in the subgroup generated by $g_1, \dotsc, g_{n-1}$, which gives $4^n / 2^{n-1} - 2^{n-1}$. Finally, it must not be in $S$ or $S'$, i.e. it cannot be $g_n$ multiplied by something generated by $g_1, \dotsc, g_{n-1}$. There are two possible overall phases $\pm 1$. This gives $G =  2(4^n / 2^{n-1} - 2^{n-1} - 2^{n-1}) = 2(4 \times 2^{n-1} - 2 \times 2^{n-1}) = 2^{n+1}$.

  Next we calculate $A$. Fix a stabilizer subgroup. Then for $h$ we can choose any of the elements of the stabilizer subgroup not generated by $g_1, \dotsc, g_{n-1}$. This gives $A = 2^n - 2^{n-1} = 2^{n-1}$.

  Finally we calculate $G/A = 4$ as required.
\end{proof}
  \section{Relation to Spekkens' new formulation}\label{newformulation}
  Spekkens' has previously outlined \cite{spektalk} a new formulation of the toy theory which is very closely related to the stabilizer notation presented here. The new formulation is based around ``linear functionals'' or ``canonical variables''. These can be put in 1-to-1 correspondence with the elements of $G_n$ with $\alpha = 1$. For an elementary system, the canonical variables are $0$, $X$, $P$, and $X+P$ which may be taken to correspond to $\I$, $\X$, $\Z$ and $\Y$ respectively. The correspondence for composite systems can then be built up from this, so that $X_1 + X_3 + P_3$ corresponds to $\X\otimes\I\otimes\Y$ and so on. The toy stabilizer notation combines canonical variables with their values, for example having $\X\otimes\Z$ in a toy stabilizer subgroup represents knowledge that $X_1 + P_2 = 0$ whereas $-\X\otimes\Z$ represents the knowledge $X_1 + P_2 = 1$.

  The new formulation's ``poisson bracket'' condition for jointly-knowable variables is exactly the usual condition on the check vectors of commuting observables. 

  Measurements in the new formulation determine the value of some set of jointly-knowable variables, or equivalently a set of commuting toy observables. Therefore Lemmas~\ref{onetwomeas} and \ref{threemeas} show that for one or two systems the notion of measurement is identical in both formulations, whereas for three or more systems not all of the measurements in the original formulation are included in the new formulation.

  The new formulation can be compared to qubit quantum mechanics by using the discrete Wigner representation\cite{discretewig,gross} of the latter. From this perspective the difference between the two theories comes from the fact the discrete Wigner function is sometimes negative, whereas the toy theory only uses positive probabilities.

\bibliography{pseudostab}
\end{document}